\newtheorem{property}{Property}
\newtheorem*{theorem*}{Theorem}
\newtheorem*{proposition*}{Proposition}
\newtheorem*{lemma*}{Lemma}
\def\ceplas{\textsf{\footnotesize AS}}
\def\ceplor{\textsf{\footnotesize OR}}
\def\ceplfilter{\textsf{\footnotesize FILTER}}
\def\ceplwin{\textsf{\footnotesize WINDOW}}
\def\trueb{\textsf{\footnotesize TRUE}}
\def\true3v{$\textsf{TRUE_{c}}$}
\def\false3v{$\textsf{FALSE_{c}}$}
\def\rg{$\mathit{RG}$}
\title{Symbolic Automata with Memory: a Computational Model for Complex Event Processing}
\titlerunning{Symbolic Automata with Memory}
\author{Elias Alevizos}{National Center for Scientific Research (NCSR) ``Demokritos'', Greece\\National Kapodistrian University of Athens, Greece}{alevizos.elias@iit.demokritos.gr}{}{}
\author{Alexander Artikis}{University of Piraeus, Greece\\National Center for Scientific Research (NCSR) ``Demokritos'', Greece}{a.artikis@unipi.gr}{}{}
\author{Georgios Paliouras}{National Center for Scientific Research (NCSR) ``Demokritos'', Greece}{paliourg@iit.demokritos.gr}{}{}
\authorrunning{E. Alevizos et. al.}
\subjclass{
\ccsdesc[500]{Theory of computation~Streaming models},
\ccsdesc[500]{Theory of computation~Automata over infinite objects},
\ccsdesc[500]{Theory of computation~Transducers},
\ccsdesc[500]{Theory of computation~Regular languages},
\ccsdesc[300]{Information systems~Data streams},
\ccsdesc[300]{Information systems~Temporal data}
}
\keywords{Complex event processing, Stream processing, Register automata}
\begin{document}

\maketitle

\begin{abstract}
We propose an automaton model which is a combination of symbolic and register automata, i.e.,
we enrich symbolic automata with memory.
We call such automata Register Match Automata (RMA).
RMA extend the expressive power of symbolic automata,
by allowing formulas to be applied not only to the last element read from the input string,
but to multiple elements, stored in their registers. 
RMA also extend register automata, by allowing arbitrary formulas, besides equality predicates.
We study the closure properties of RMA under union, concatenation, Kleene$+$, complement and determinization and show that RMA,
contrary to symbolic automata,
are not determinizable when viewed as recognizers,
without taking the output of transitions into account.
However, when a window operator, a quintessential feature in Complex Event Processing, is used,
RMA are indeed determinizable even when viewed as recognizers.
We present detailed algorithms for constructing deterministic RMA from regular expressions extended with $n$-ary constraints.
We show how RMA can be used in Complex Event Processing in order to detect patterns upon streams of events,
using a framework that provides denotational and compositional semantics, 
and that allows for a systematic treatment of such automata.
\end{abstract}

\section{Introduction}

A Complex Event Processing (CEP) system takes as input a stream of events,
along with a set of patterns,
defining relations among the input events,
and detects instances of pattern satisfaction,
thus producing an output stream of complex events \cite{luckham2008power,cugola_processing_2012}.
Typically, an event has the structure of a tuple of values which might be numerical or categorical.
Since time is of critical importance for CEP,
a temporal formalism is used in order to define the patterns to be detected.
Such a pattern imposes temporal (and possibly atemporal) constraints on the input events,
which, if satisfied, lead to the detection of a complex event.
Atemporal constraints may be ``local'',
applying only to the last event read,
e.g., 
in streams from temperature sensors,
the constraint that the temperature of the last event is higher than some constant threshold.
Alternatively, they might involve multiple events of the pattern,
e.g., 
the constraint that the temperature of the last event is higher than that of the previous event.

Automata are of particular interest for the field of CEP,
because they provide a natural way of handling sequences.
As a result, 
the usual operators of regular expressions, concatenation, union and Kleene$+$,
have often been given an implicit temporal interpretation in CEP.
For example, 
the concatenation of two events is said to occur whenever the second event is read by an automaton after the first one,
i.e.,
whenever the timestamp of the second event is greater than the timestamp of the first 
(assuming the input events are temporally ordered).
On the other hand,
atemporal constraints are not easy to define using classical automata,
since they either work without memory or, 
even if they do include a memory structure,
e.g., as with push-down automata,
they can only work with a finite alphabet of input symbols.
For this reason,
the CEP community has proposed several extensions of classical automata.
These extended automata have the ability to store input events and later retrieve them in order to evaluate whether a constraint is satisfied \cite{demers_cayuga_2007,agrawal_efficient_2008,cugola_tesla_2010}.
They resemble both register automata \cite{kaminski_finite-memory_1994},
through their ability to store events,
and symbolic automata \cite{dantoni_power_2017},
through the use of predicates on their transitions.
They differ from symbolic automata in that predicates apply to multiple events, 
retrieved from the memory structure that holds previous events.
They differ from register automata in that predicates may be more complex than that of (in)equality.

One issue with these automata is that their properties have not been systematically investigated,
as is the case with models derived directly from the field of languages and automata.
See \cite{DBLP:journals/corr/abs-1709-05369} for a discussion about the weaknesses of automaton models in CEP. 
Moreover, they sometimes need to impose restrictions on the use of regular expression operators in a pattern, 
e.g., nesting of Kleene closure operators is not allowed.
A recently proposed formal framework for CEP attempts to address these issues \cite{DBLP:journals/corr/abs-1709-05369}.
Its advantage is that it provides a logic for CEP patterns, called CEPL, with simple denotational and compositional semantics, 
but without imposing severe restrictions on the use of operators.
A computational model is also proposed, through the so-called \textit{Match Automata} (MA), which may be conceived as variations of symbolic transducers \cite{dantoni_power_2017}. 
However, MA can only handle ``local'' constraints,
i.e.,
the formulas on their transitions are unary and thus are applied only to the last event read.
We propose an automaton model that is an extension of MA.
It has the ability to store events and its transitions have guards in the form of $n$-ary formulas.
These formulas may be applied both to the last event and to past events that have been stored. 
We call such automata \textit{Register Match Automata} (RMA).
RMA extend the expressive power of MA, symbolic automata and register automata, 
by allowing for more complex patterns to be defined and detected on a stream of events.
The contributions of the paper may be summarized as follows:
\begin{itemize}
	\item We present an algorithm for constructing a RMA from a regular expression with constraints in which events may be constrained through $n$-ary formulas, as a significant extension of the corresponding algorithms for symbolic automata and MA. 
	\item We prove that RMA are closed under union, concatenation, Kleene$+$ and determinization but not under complement.
	\item We show that RMA, when viewed as recognizers, are not determinizable.
	\item We show that patterns restricted through windowing, a common constraint in CEP, can be converted to a deterministic RMA, if the output of the transitions is not taken into account, i.e., if RMA are viewed as recognizers.
\end{itemize}
A selection of proofs and algorithms for the most important results may be found in the Appendix. 

\section{Related Work}

Because of their ability to naturally handle sequences of characters,
automata have been extensively adopted in CEP,
where they are adapted in order to handle streams composed of tuples.
Typical cases of CEP systems that employ automata are
the Chronicle Recognition System \cite{ghallab_chronicles_1996,dousson_chronicle_2007},
Cayuga \cite{demers_cayuga_2007},
TESLA \cite{cugola_tesla_2010} and
SASE \cite{agrawal_efficient_2008,zhang_complexity_2014}.
There also exist systems that do not employ automata as their computational model,
e.g., there are logic-based systems \cite{artikis2015event} or systems that use trees \cite{mei2009zstream},
but the standard operators of concatenation, union and Kleene$+$ are quite common and they may be considered as a reasonable set of core operators for CEP.
For a tutorial on CEP languages, see \cite{artikis2017complex},
and for a general review of CEP systems, see \cite{cugola_processing_2012}.
However, current CEP systems do not have the full expressive power of regular expressions,
e.g.,
SASE does not allow for nesting Kleene$+$ operators. 
Moreover, due to the various approaches
implementing the basic operators and extensions in their own way,
there is a lack of a common ground that could act as a basis for systematically understanding the properties of these automaton models.
The abundance of different CEP systems,
employing various computational models and using various formalisms 
has recently led to some attempts at providing a unifying framework 
\cite{DBLP:journals/corr/abs-1709-05369,DBLP:journals/corr/Halle17}.
Specifically, in \cite{DBLP:journals/corr/abs-1709-05369},
a set of core CEP operators is identified,
a formal framework is proposed that provides denotational semantics for CEP patterns, 
and a computational model is described,
through \emph{Match Automata} (MA), 
for capturing such patterns.

Outside the field of CEP,
research on automata has evolved towards various directions.
Besides the well-known push-down automata that can store elements from a finite set to a stack,
there have appeared other automaton models with memory,
such as register automata, 
pebble automata and 
data automata \cite{kaminski_finite-memory_1994,neven_finite_2004,bojanczyk2011two}.
For a review, see \cite{segoufin_automata_2006}.
Such models are especially useful when the input alphabet cannot be assumed to be finite,
as is often the case with CEP.
Register automata (initially called finite-memory automata) constitute one of the earliest such proposals \cite{kaminski_finite-memory_1994}.
At each transition,
a register automaton may choose to store its current input (more precisely, the current input's data payload)
to one of a finite set of registers.
A transition is followed if the current input complies with the contents of some register.
With register automata,
it is possible to recognize strings constructed from an infinite alphabet,
through the use of (in)equality comparisons among the data carried by the current input and the data stored in the registers.
However,
register automata do not always have nice closure properties,
e.g.,
they are not closed under determinization
(see \cite{libkin2015regular} for an extensive study of register automata).
Another model that is of interest for CEP is the symbolic automaton,
which allows CEP patterns to apply constraints on the attributes of events.
Automata that have predicates on their transitions were already proposed in \cite{noord_finite_2001}.
This initial idea has recently been expanded and more fully investigated in symbolic automata 
\cite{veanes_symbolic_2010,veanes_applications_2013,dantoni_power_2017}.
In this automaton model,
transitions are equipped with formulas constructed from a Boolean algebra.
A transition is followed
if its formula,
applied to the current input,
evaluates to true.
Contrary to register automata,
symbolic automata have nice closure properties,
but their formulas are unary and thus can only be applied to a single element from the input string.

This is the limitation that we address in this paper,
i.e.,
we propose an automaton model, called \textit{Register Match Automata} (RMA), 
whose transitions can apply $n$-ary formulas (with $n{>}1$) on multiple elements.
RMA are thus more expressive than symbolic automata (and Match Automata),
thus being suitable for practical CEP applications,
while, at the same time,
their properties can be systematically investigated,
as in standard automata theory.

\section{Grammar for Patterns with $n$-ary Formulas}
\label{section:cepl}

Before presenting RMA,
we first briefly present a high-level formalism for defining CEP patterns,
called ``CEP logic'' (CEPL), 
introduced in \cite{DBLP:journals/corr/abs-1709-05369}
(where a detailed exposition and examples may be found).

We first introduce an example from \cite{DBLP:journals/corr/abs-1709-05369} that will be used throughout the paper to provide intuition.
The example is that of a set of sensors taking temperature and humidity measurements,
monitoring an area for the possible eruption of fires.
A stream is a sequence of events,
where each event is a tuple of the form $(\mathit{type},\mathit{id},\mathit{value})$.
The first attribute ($\mathit{type}$) is the type of measurement: 
$H$ for humidity and $T$ for temperature.
The second one ($\mathit{id}$) is an integer identifier, unique for each sensor.
It has a finite set of possible values.
Finally, the third one ($\mathit{value}$) is the real-valued measurement from a possibly infinite set of values.
Table \ref{table:example_stream} shows an example of such a stream.
We assume that events are temporally ordered and their order is implicitly provided through the index.

\begin{table}[t]
\centering
\caption{Example stream.}
\begin{tabular}{cccccccc} 
\toprule
type & T & T & T & H & H & T & ... \\ 
\midrule
id & 1 & 1 & 2 & 1 & 1 & 2 & ... \\
\midrule
value & 22 & 24 & 32 & 70 & 68 & 33 & ... \\
\midrule
index & 0 & 1 & 2 & 3 & 4 & 5 & ... \\
\bottomrule
\end{tabular}
\label{table:example_stream}
\end{table}

The basic operators of CEPL's grammar are
the standard operators of regular expressions,
i.e.,
concatenation, union and Kleene$+$,
frequently referred to with the equivalent terms
\textit{sequence}, \textit{disjunction} and \textit{iteration} respectively.
The formal definition is as follows \cite{DBLP:journals/corr/abs-1709-05369}:
\begin{definition}[core--CEPL grammar]
\label{definition:core_cepl_grammar}
The core--CEPL grammar is defined as:
\begin{equation*}
\phi := R\ \ceplas\ x\ |\ \phi\ \ceplfilter\ f\ |\ \phi\ \ceplor\ \phi \ |\ \phi;\phi\ |\ \phi^{+}
\end{equation*}
where $R$ is a relation name, 
$x$ a variable, 
$f$ a selection formula,
``$;$'' denotes sequence, 
``$\ceplor$'' denotes disjunction
and ``$^{+}$'' denotes iteration.
\end{definition}
Intuitively,
$R$ refers to the type of an event 
(e.g., $T$ for temperature)
and variables $x$ are used in order to be able to refer to events involved in a pattern through the \ceplfilter\ constraints
(e.g., $T\ \ceplas\ x\ \ceplfilter\ x.\mathit{value} > 20$).
From now on,
we will use the term ``expression'' to refer to CEPL patterns defined as above
and the term ``formula'' to refer to the selection formulas $f$ in \ceplfilter\ expressions.
Note that extended versions of CEPL include more operators,
beyond the core ones presented above,
but these will not be treated in this paper.
We reserve such a treatment for future work.

Assume that $S=t_{0}t_{1}t_{2}\cdots$ is a stream of events/tuples
and $\phi$ a CEPL expression.
Our aim is to detect matches of $\phi$ in $S$.
A match $M$ is a set of natural numbers,
referring to indices in the stream.
If $M{=}\{i_{1},i_{2},\cdots\}$ is a match for $\phi$,
then the set of tuples referenced by $M$,
$S[M]{=}\{t_{i_{1}}, t_{i_{2}}, \cdots \}$ represents a complex event
(of type $\phi$).
Determining whether an arbitrary set of indices is a match for an expression
requires a definition for the semantics of CEPL expressions,
which may be found in 
\cite{DBLP:journals/corr/abs-1709-05369}.
There is one remark that is worth making at this point.
Let 
$\phi{:=}(T\ \ceplas\ x);(H\ \ceplas\ y)$
be a CEPL expression for our running example.
It aims at detecting pairs of events in the stream,
where the first is a temperature measurement and the second a humidity measurement.
Readers familiar with automata theory might expect that,
when applied to the stream of Table \ref{table:example_stream},
it would detect only $M{=}\{ 2, 3\}$ as a match.
However, 
in CEP,
such contiguous matches are not always the most interesting.
This is the reason why,
according to the CEPL semantics,
all the possible pairs of $T$ events followed by $H$ events are accepted as matches.
Specifically,
$\{ 0, 3 \}$, $\{ 0, 4 \}$, $\{ 1, 3 \}$, $\{ 1, 4 \}$, $\{ 2, 3 \}$, $\{ 2, 4 \}$
would all be matches.
There are ways to enforce a more ``classical behavior'' for CEPL expressions,
like accepting only contiguous matches,
but this requires the notion of \textit{selection strategies} \cite{DBLP:journals/corr/abs-1709-05369,zhang_complexity_2014}.
We only deal with the default ``behavior'' of CEPL expressions.
As another example,
let
\begin{equation}
\label{expression:t_seq_h_filter_eq_id}
\phi_{1} := (T\ \ceplas\ x);(H\ \ceplas\ y)\ \ceplfilter\ (x.\mathit{id} = y.\mathit{id})
\end{equation}
be a CEPL expression,
as previously,
but with the binary formula $x.\mathit{id}{=}y.\mathit{id}$ as an extra constraint.
The matches for this expression would be the same,
except for $\{ 2, 3 \}$, 
since events/tuples $t_{2}$ and $t_{3}$ have different sensor identifiers.

The semantics of CEPL requires the notions of valuations
(a valuation is a partial function $v: \mathbf{X} \rightharpoonup \mathbb{N}$,
mapping variables to indices, see \cite{DBLP:journals/corr/abs-1709-05369}) 
and may be informally given as follows:
The base case, $R\ \ceplas\ x$, 
is similar to the base case in classical automata.
We check whether the event is of type $R$,
i.e.,
if $M=\{i\}$, $v(x)=i$ for the valuation $v$ and the type of $t_{i}$ is $R$,
then $M$ is indeed a match.
For the case of expressions like $\phi\ \ceplfilter\ f(x,y,z,\cdots)$,
$M$ under $v$ must be a match of the sub-expression $\phi$.
In addition, 
the tuples associated with the variables $x,y,z,\cdots$ through $v$ must satisfy $f$,
i.e.,
$f(t_{v(x)},t_{v(y)},t_{v(z)},\cdots)=\trueb$.
If $\phi := \phi_{1}\ \ceplor\ \phi_{2}$,
$M$ must be a match either of $\phi_{1}$ or of $\phi_{2}$.
If $\phi := \phi_{1}\ ;\ \phi_{2}$,
then we must be able to split $M$ in two matches $M=M_{1} \cdot M_{2}$ 
($M_{2}$ follows $M_{1}$) 
so that 
$M_{1}$ is a match of $\phi_{1}$
and $M_{2}$ is a match of $\phi_{2}$.
Finally, 
for $\phi := \psi^{+}$,
we must be able to split $M$ in $n$ matches $M=M_{1} \cdot M_{2} \cdots M_{n}$
so that $M_{1}$ is a match of the sub-expression $\psi$ 
(under the initial valuation $v$)
and the subsequent matches $M_{i}$ are also matches of $\psi$ 
(under new valuations $v_{i}$).
The fact that $M$ is a match of $\phi$ over a stream $S$,
starting at index $i \in \mathbb{N}$,
and under the valuation $v$ is denoted by $M \in \llbracket \phi \rrbracket(S,i,v)$ \cite{DBLP:journals/corr/abs-1709-05369}.


Variables in CEPL expressions are useful for defining constraints in the form of formulas.
However, careless use of variables may lead to some counter-intuitive and undesired consequences.
The notions of \textit{well--formed} and \textit{safe} expressions 
deal with such cases \cite{DBLP:journals/corr/abs-1709-05369}.
For our purposes,
we need to impose some further constraints on the use of variables.
Our aim is to construct an automaton model that can capture CEPL expressions with $n$-ary formulas.
In addition, 
we would like to do so with automata that have a finite number of registers,
where each register is a memory slot that can store one event.
The reason for the requirement of bounded memory is that automata with unbounded memory have two main disadvantages:
they often have undesirable theoretical properties, 
e.g., push-down automata are not closed under determinization; 
and they are not a realistic option for CEP applications,
which always work with restricted resources.
Under the CEPL semantics though, 
it is not always possible to capture patterns with bounded memory.
This is the reason why we restrict our attention to a fragment of core--CEPL that can be evaluated with bounded memory. 
As an example of an expression requiring unbounded memory, consider the following:
\begin{equation}
\label{eq:t_filter_eq_id_plus_seq_h}
\phi_{3} := (T\ \ceplas\ x\ \ceplfilter\ x.id{=}y.id)^{+} ; (H\ \ceplas\ y)
\end{equation}
Although a bit counter-intuitive,
it is well--formed.
It captures a sequence of one or more $T$ events,
followed by a $H$ event 
and the \ceplfilter\ formula checks that all these events are from the same sensor.
$ M{=}\{ 0, 1, 3\} $ would be a match for this expression in our example.
However, 
if more $T$ events from the sensor with $\mathit{id}{=}1$ were present before the $H$ event,
then these should also constitute a match,
regardless of the number of these $T$ events.
An automaton trying to capture such a pattern would need to store all the $T$ events,
until it sees a $H$ event and can compare the $\mathit{id}$ of this $H$ event with the $\mathit{id}$ of every previous $T$ event.
Therefore, such an automaton would require unbounded memory.
Note that, 
for this simple example with the equality comparison,
an automaton could be built that stores only the first $T$ event and then checks this event's $\mathit{id}$ with the $\mathit{id}$ of every new event.
In the general case and for more complex constraints though,
e.g.,
an inequality comparison,
all $T$ events would have to be stored.


We exclude such cases by focusing on the so-called \emph{bounded} expressions,
which are a specific case of \emph{well-formed} expressions.
Bounded expressions are formally defined as follows 
(see \cite{DBLP:journals/corr/abs-1709-05369} for a definition of $\mathit{bound}(\phi)$):
\begin{definition}[Bounded expression]
\label{definition:bounded_expressions}
A core--CEPL expression $\phi$ is bounded if it is well-formed and one of the following conditions hold:
\begin{itemize}
	\item $\phi := R\ \ceplas\ x$.
	\item $\phi := \psi\ \ceplfilter\ f$ and $\forall x \in var(f)$, we have that $x \in \mathit{bound}(\psi)$.
	\item $\phi := \phi_{1}\ \ceplor\ \phi_{2} \ |\ \phi_{1};\phi_{2}\ |\ \psi^{+}$ and all sub-expressions of $\phi$ are bounded. Moreover, if $\phi := \phi_{1};\phi_{2}$, then $var(\phi_{1})\cap var(\phi_{2})=\emptyset$.
\end{itemize}
\end{definition}

In other words,
for $\psi\ \ceplfilter\ f$,
variables in $f$ must be defined inside $\psi$ and not in a wider scope.
Additionally, if a variable is defined in a disjunct of an \ceplor\ operator,
then it must be defined in every other disjunct of this operator.
Variables defined inside a $^{+}$ operator are also not allowed to be used outside this operator and vice versa.
Finally, variables are not to be shared among sub-expressions of $;$ operators.
According to this definition then,
Expression \eqref{eq:t_filter_eq_id_plus_seq_h} is well-formed,
but not bounded,
since variable $y$ in $(T\ \ceplas\ x\ \ceplfilter\ x.id{=}y.id)$ does not belong to 
$\mathit{bound}(T\ \ceplas\ x)$. 
Note that this definition does not exclude nesting of regular expression operators.
For example, consider the following expression:
\begin{equation*}
\begin{aligned}
  ( &   (T\ \ceplas\ x_{1}\ ;\ T\ \ceplas\ x_{2}\ \ceplfilter\ x_{1}.\mathit{value} = x_{2}.\mathit{value}) ; \\
    & (H\ \ceplas\ x_{3}\ ;\ H\ \ceplas\ x_{4}\ \ceplfilter\ x_{3}.\mathit{value} = x_{4}.\mathit{value})^{+}  )^{+}
\end{aligned}
\end{equation*}
It has nested Kleene$+$ operators but is still bounded,
since variables are not used outside the scope of the Kleene$+$ operators where they are defined.

\section{Register Match Automata}
\label{section:cepl_model:rmas}

In order to capture bounded core--CEPL expressions, 
we propose Register Match Automata (RMA), an automaton model equipped  with memory,
as an extension of MA introduced in \cite{DBLP:journals/corr/abs-1709-05369}.
The basic idea is the following.
We add a set of registers \rg\ to an automaton in order to be able to store events from the stream
that will be used later in $n$-ary formulas. 
Each register can store at most one event.
In order to evaluate whether to follow a transition or not,
each transition is equipped with a guard, in the form of a formula.
If the formula evaluates to \trueb, then the transition is followed.
Since a formula might be $n$-ary, with $n{>}1$,
the values passed to its arguments during evaluation may be either the current event
or the contents of some registers,
i.e.,
some past events.
In other words, the transition is also equipped with a \textit{register selection},
i.e., a tuple of registers.
Before evaluation, the automaton reads the contents of those registers,
passes them as arguments to the formula 
and the formula is evaluated.
Additionally, if, during a run of the automaton, a transition is followed,
then the transition has the option to write the event that triggered it
to some of the automaton's registers.
These are called its \textit{write registers},
i.e.,
the registers whose contents may be changed by the transition.
Finally, each transition, when followed,
produces an output,
either $\circ$,
denoting that the event is not part of the match for the pattern that the RMA tries to capture,
or $\bullet$,
denoting that the event is part of the match.
We also allow for $\epsilon$-transitions, as in classical automata,
i.e., 
transitions that are followed without consuming any events and without altering the contents of the registers.

We now formally define RMA.
To aid understanding,
we present three separate definitions:
one for the automaton itself,
one for its transitions
and one for its configurations.

\begin{definition}[Register Match Automaton]
\label{definition:rmas:rma}
A RMA is a tuple ($Q$, $Q^{s}$, $Q^{f}$, \rg, $\Delta$)
where 
$Q$ is a finite set of states, 
$Q^{s}\subseteq Q$ the set of start states, 
$Q^{f}\subseteq Q$ the set of final states, 
\rg\ a finite set of registers
and $\Delta$ the set of transitions (see Definition \ref{definition:rmas:transition}).
When we have a single start state,
we denote it by $q^{s}$.
\end{definition}

For the definition of transitions,
we need the notion of a $\gamma$ function representing the contents of the registers, i.e., 
$\gamma: \mathit{RG} \cup \{\sim\} \rightarrow \mathit{tuples}(\mathcal{R})$.
The domain of $\gamma$ also contains $\sim$, representing the current event,
i.e., $\gamma(\sim)$ returns the last event consumed from the stream.

\begin{definition}[Transition of RMA]
\label{definition:rmas:transition}
A transition $\delta \in \Delta$ is a tuple $(q,f,rs,p,R,o)$, 
also written as $(q,f,rs)\rightarrow(p,R,o)$,
where
$q,p \in Q$, 
$f$ is a selection formula (as defined in \cite{DBLP:journals/corr/abs-1709-05369}),
$\mathit{rs}=(r_{1},\cdots,r_{n})$ the register selection,
	where $r_{i} \in \mathit{RG} \cup \{\sim\}$ ,
$R \in 2^{\mathit{RG}}$ the write registers and
$o \in \{\circ,\bullet\}$ is the set of outputs.
We say that a transition applies iff
$\delta{=}\epsilon$ and no event is consumed, 
or $f(\gamma(r_{1}),\cdots,\gamma(r_{n})){=}\trueb$ upon consuming an event.
\end{definition}

We will use the dot notation to refer to elements of tuples,
e.g., if $A$ is a RMA, then $A.Q$ is the set of its states.
For a transition $\delta$,
we will also use the notation $\delta.\mathit{source}$ and $\delta.\mathit{target}$ to refer to its source and target state respectively.
We will also sometimes write $\gamma(\mathit{rs})$ as shorthand notation for $(\gamma(r_{1}),\cdots,\gamma(r_{n}))$.

\begin{figure}[t]
\begin{centering}
\includegraphics[width=0.6\linewidth]{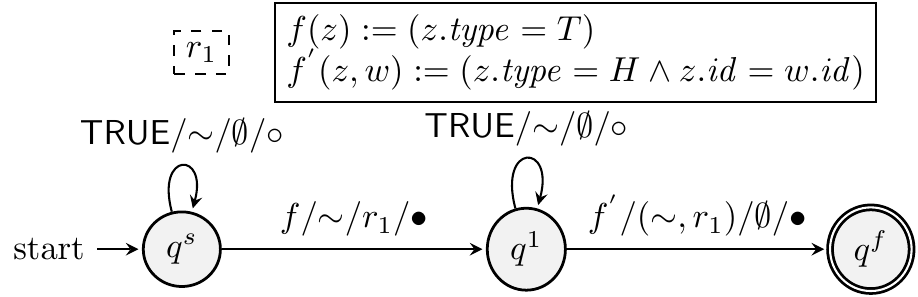}
\caption{RMA corresponding to Expression \eqref{expression:t_seq_h_filter_eq_id}.}
\label{fig:example1}
\end{centering}
\end{figure}

As an example,
consider the RMA of Fig. \ref{fig:example1}.
Each transition is represented as $f/\mathit{rs}/R/o$,
where $f$ is its formula,
$\mathit{rs}$ its register selection,
$R$ its write registers
and $o$ its output.
The formulas of the transitions are presented in a separate box,
above the RMA.
Note that the arguments of the formulas do not correspond to any variables of any CEPL expression,
but to registers, through the register selection
(we use $z$ and $w$ as arguments to avoid confusion with the variables of CEPL expressions).
Take the transition from $q^{s}$ to $q^{1}$ as an example.
It takes the last event consumed from the stream ($\sim$) and
passes it as argument to the unary formula $f$.
If $f$ evaluates to \trueb,
it writes this last event to register $r_{1}$,
displayed inside a dashed square in Fig. \ref{fig:example1}, 
and outputs $\bullet$.
On the other hand, 
the transition from $q^{1}$ to $q^{f}$ uses both the current event and the event stored in $r_{1}$ ($(\sim,r_{1})$) and passes them to the binary formula $f^{'}$.
Finally, the formula \trueb\ (for example, in the self-loop of $q^{s}$) is a unary predicate that always evaluates to \trueb.
The RMA of Fig. \ref{fig:example1} captures Expression \eqref{expression:t_seq_h_filter_eq_id}.

Note that there is a subtle issue with respect to how formulas are evaluated.
The definition about when a transition applies,
as it is,
does not take into account cases where the contents of some register(s) in a register selection are empty.
In such cases, 
it would not be possible to evaluate a formula
(or we would need a 3-valued algebra, 
like Kleene's or Lukasiewicz's; 
see \cite{bergmann2008introduction} for an introduction to many-valued logics).
For our purposes,
it is sufficient to require that all registers in a register selection are not empty
whenever a formula is evaluated
(they can be empty before any evaluation).
There is a structural property of RMA,
in the sense that it depends only on the structure of the RMA and is independent of the stream, that can satisfy this requirement.
We require that, for a RMA $A$, for every state $q$, if $r$ is a register in one of the register selections of the outgoing transitions of $q$, then $r$ must appear in every trail to $q$.
A trail is a sequence of successive transitions (the target of every transition must be the source of the next transition) starting from the start state,
without any state re-visits.
A walk is similarly defined, but allows for state re-visits. 
We say that a register $r$ appears in a trail if there exists at least one transition $\delta$ in the trail such that $r \in \delta.R$.
%

We can describe formally the rules for the behavior of a RMA through the notion of configuration:
\begin{definition}[Configuration of RMA]
\label{definition:rmas:configuration}
Assume a stream of events
$S=t_{0},t_{1},\cdots,t_{i},t_{i+1},\cdots$ 
and a RMA $A$ consuming $S$.
A configuration of $A$ is a triple
$c=[i,q,\gamma]$, 
where
$i$ is the index of the next event to be consumed,
$q$ is the current state of $A$ and
$\gamma$ the current contents of $A$'s registers.
We say that $c'=[i',q',\gamma']$ is a \emph{successor} of $c$ iff the following hold:
\begin{itemize}
	\item $\mathit{\exists\ \delta: (q,f,rs) \rightarrow (q',R,o)}$ applies.
	\item $i{=}i'$ if $\delta{=}\epsilon$. Otherwise $i'{=}i+1$.
	\item $\gamma' {=} \gamma$ if $\delta{=}\epsilon$ or $R{=}\emptyset$. Otherwise, 
	$\forall r \notin R, \gamma'(r){=}\gamma(r)$ and $\forall r \in R, \gamma'(r){=}t_{i}$. 
\end{itemize}
\end{definition}

For the initial configuration $c^{s}$,
before consuming any events,
we have that $c^{s}.q \in Q^{s}$ and, for each $r \in \mathit{RG}$, $c^{s}.\gamma(r){=}\sharp$,
where $\sharp$ denotes the contents of an empty register,
i.e.,
the initial state is one of the start states and all registers are empty.
Transitions from the start state cannot reference any registers in their register selection,
but only $\sim$. 
Hence, they are always unary.
In order to move to a successor configuration,
we need a transition whose formula evaluates to \trueb ,
applied to $\sim$, if it is unary, 
or to $\sim$ and the contents of its register selection, if it is $n$-ary.
If this is the case, 
we move one position ahead in the stream and update the contents of this transition's write registers,
if any, 
with the event that was read. 
If the transition is an $\epsilon$-transition, 
we do not move the stream pointer and do not update the registers,
but only move to the next state.
We denote a succession by $[i,q,\gamma] \rightarrow [i',q',\gamma']$,
or $[i,q,\gamma] \overset{\delta/o}{\rightarrow} [i',q',\gamma']$ if we need to refer to the transition and its output.

The actual behavior of a RMA upon reading a stream is captured by the notion of the run:
\begin{definition}[Run of RMA over stream]
A run $\varrho$ of a RMA $A$ over a stream $S$ is a sequence of successor configurations
$[0,q_{s},\gamma_{0}] \overset{\delta_{0}/o_{0}}{\rightarrow} [1,q_{1},\gamma_{1}] \overset{\delta_{1}/o_{1}}{\rightarrow} \cdots \overset{\delta_{n-1}/o_{n-1}}{\rightarrow} [n,q_{n},\gamma_{n}]$.
A run is called accepting iff $q_{n} \in Q^{f}$ and $o_{n-1}=\bullet$.
\end{definition}

A run of the RMA of Fig. \ref{fig:example1}, 
while consuming the first four events from the stream of Table \ref{table:example_stream}, 
is the following:
\begin{equation}
\label{run:example}
[0,q^{s},\sharp] \overset{\delta^{s,s}/\circ}{\rightarrow} [1,q^{s},\sharp] \overset{\delta^{s,1}/\bullet}{\rightarrow} [2,q^{1},(T,1,24)] \overset{\delta^{1,1}/\circ}{\rightarrow}
[3,q^{1},(T,1,24)] \overset{\delta^{1,f}/\bullet}{\rightarrow} [4,q^{f},(T,1,24)]
\end{equation}
Transition superscripts refer to states of the RMA,
e.g.,
$\delta^{s,s}$ is the transition from the start state to itself,
$\delta^{s,1}$ is the transition from the start state to $q^{1}$, etc.
Run \eqref{run:example} is not the only run,
since the RMA could have followed other transitions with the same input,
e.g.,
moving directly from $q^{s}$ to $q^{1}$.

The set of all runs over a stream $S$ that $A$ can follow is denoted by $Run_{n}(A,S)$
and the set of all accepting runs by $Run_{n}^{f}(A,S)$.
If $\varrho$ is a run of a RMA $A$ over a stream $S$ of length $n$,
by $\mathit{match}(\varrho)$ we denote all the indices in the stream that were ``marked'' by the run,
i.e.,
$\mathit{match}(\varrho){=}\{ i \in [0,n]: o_{i}{=}\bullet\}$.
For the example of Run \eqref{run:example},
we see that this run outputs a $\bullet$ after consuming $t_{1}$ and $t_{3}$.
Therefore, $match(\varrho){=}\{ 1, 3\}$.
We can also see that there exists another accepting run $\varrho^{'}$ for which $match(\varrho^{'}){=}\{ 0, 3\}$.
These are then the matches of this RMA after consuming the first four events of the example stream.
We formally define the matches produced by a RMA as follows,
similarly to the definition of matches of MA \cite{DBLP:journals/corr/abs-1709-05369}:
\begin{definition}[Matches of RMA]
The set of matches of a RMA $A$ over a stream $S$ at index $n$ is:
$\llbracket A \rrbracket_{n}(S) = \{ \mathit{match}(\varrho): \varrho \in Run_{n}^{f}(A,S)\}$.
The set of matches of a RMA $A$ over a stream $S$ is: $\llbracket A \rrbracket(S) = \bigcup\limits_{n}{ \llbracket A \rrbracket_{n}(S) }$.
\end{definition}

\section{Translating Expressions to Register Match Automata}
\label{section:cepl_model:nrmas}

We now show how,
for each bounded, core--CEPL expression with $n$-ary formulas,
we can construct an equivalent RMA.
Equivalence between an expression $\phi$ and a RMA $A_{\phi}$ means that a set of stream indices $M$ is a match of $\phi$ over a stream $S$ iff $M$ is a match of $A_{\phi}$ over $S$ or, more formally, $M \in \llbracket A_{\phi} \rrbracket(S_{i}) \Leftrightarrow \exists v: M \in \llbracket \phi \rrbracket(S,i,v)$.
\begin{theorem}
\label{theorem:cepl2nrmae}
For every bounded, core--CEPL expression (with $n$-ary formulas) 
there exists an equivalent RMA.
\end{theorem}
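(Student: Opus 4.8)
The plan is to proceed by structural induction on the core--CEPL grammar of Definition~\ref{definition:core_cepl_grammar}, building an equivalent RMA for each syntactic form. Throughout, I would maintain a bookkeeping invariant: for every variable $x$ occurring in the (bounded) expression I reserve a dedicated register $r_{x} \in \rg$, and I arrange that along any accepting run the event bound to $x$ is exactly the one written into $r_{x}$. Boundedness (Definition~\ref{definition:bounded_expressions}) is what makes this invariant maintainable: variables are not shared across the two operands of a sequence, a variable used in a disjunct must be defined in every disjunct, and variables defined inside a $^{+}$ are confined to it, so the register assignment can be carried compositionally without clashes.

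For the base case $R\ \ceplas\ x$ I would build a two-state RMA with start state $q^{s}$ and final state $q^{f}$: a self-loop on $q^{s}$ guarded by \trueb\ with register selection $(\sim)$, empty write set and output $\circ$ (which lets a match skip over irrelevant prefix events, as required by the CEP semantics that admits non-contiguous matches), together with a transition $q^{s} \to q^{f}$ guarded by the unary formula ``$\mathit{type}{=}R$'' on $\sim$, write set $\{r_{x}\}$ and output $\bullet$. Correctness against $\llbracket R\ \ceplas\ x \rrbracket$ is then immediate. For the composite operators I would adapt Thompson's construction, threading the register bookkeeping through it. For $\phi_{1}\ \ceplor\ \phi_{2}$ I add a fresh start state with $\epsilon$-transitions to the start states of the inductively built $A_{\phi_{1}}$ and $A_{\phi_{2}}$; for $\phi_{1};\phi_{2}$ I add $\epsilon$-transitions from the final states of $A_{\phi_{1}}$ to the start state of $A_{\phi_{2}}$ (disjointness of their variable sets guarantees disjoint register usage); and for $\psi^{+}$ I add $\epsilon$-transitions from the final states of $A_{\psi}$ back to its start state, which is sound precisely because variables inside the $^{+}$ do not escape it, so reusing the same registers across iterations realises the ``new valuation $v_{i}$'' of each iteration in the CEPL semantics.

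The crux of the argument, and the step I expect to be the main obstacle, is the \ceplfilter\ case $\psi\ \ceplfilter\ f$ with an $n$-ary $f$. Here I take $A_{\psi}$ and attach the guard $f$ to the transitions of $A_{\psi}$ that complete the match, replacing their formula by the conjunction with $f$ and equipping them with a register selection $(r_{1},\dots,r_{n})$ in which each argument position of $f$ is mapped either to the register $r_{x}$ holding the event bound to that variable, or to $\sim$ when the variable is the one read by that very transition. Boundedness guarantees that every $x \in var(f)$ lies in $\mathit{bound}(\psi)$, hence the corresponding event has already been written to $r_{x}$ before the guard is evaluated, so all selected registers are non-empty and $f$ can indeed be evaluated. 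The delicate point is simultaneously preserving the structural property demanded after Definition~\ref{definition:rmas:transition} --- that every register appearing in an outgoing register selection of a state occurs on every trail to that state --- which must be re-established after each construction, in particular after union and concatenation, where trails are merged; the bounded variable-scoping conditions are exactly what is needed to guarantee this.

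Finally, I would establish the equivalence $M \in \llbracket A_{\phi} \rrbracket(S_{i}) \Leftrightarrow \exists v: M \in \llbracket \phi \rrbracket(S,i,v)$ by a parallel induction on the structure of $\phi$, reading off the matches of each composite RMA from those of its components via the $\epsilon$-transition structure and the placement of the $\bullet$ outputs, and invoking the CEPL semantics recalled earlier (splitting $M = M_{1}\cdot M_{2}$ for sequence, $M = M_{1}\cdots M_{n}$ for iteration, and the filter condition $f(t_{v(x)},\dots){=}\trueb$ for \ceplfilter). The only genuinely new obligation beyond the symbolic-automaton and MA translations is the register accounting in the base and filter cases, which the invariant above isolates.
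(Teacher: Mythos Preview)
Your proposal follows the same overall scheme as the paper: structural induction on the grammar, a Thompson-style compositional construction with $\epsilon$-transitions for $\ceplor$, $;$ and $^{+}$, and special handling of the $n$-ary \ceplfilter\ case as the only genuinely new ingredient beyond MA/symbolic automata. The main difference lies in register management. You allocate one register $r_{x}$ per variable \emph{eagerly} at the base case $R\ \ceplas\ x$ and always write to it, then conjoin the filter guard at the ``match-completing'' transitions of $A_{\psi}$. The paper instead allocates registers \emph{lazily}: the base case introduces no registers; when processing $\psi\ \ceplfilter\ f(x_{1},\dots,x_{n})$ it locates a transition $\delta$ associated with some $x_{i}$ such that every other $x_{j}$ already appears on every trail to $\delta.\mathit{source}$, conjoins $f$ to $\delta.f$ \emph{there}, and only at that point creates (or reuses) registers for the $x_{j}$, retro-patching the earlier $x_{j}$-transitions to write to them. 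The paper tracks this via two auxiliary maps $\delta x:\Delta\rightharpoonup\mathbf{X}$ and $rx:\mathit{RG}\to\mathbf{X}$ and three structural invariants (each bound variable appears exactly once per accepting walk; the run/valuation agree on which transition consumes which variable; all transitions for a given variable share a write register). Your single invariant subsumes these but uses strictly more registers; the paper's placement of the guard makes the structural property ``every selected register occurs on every trail to the state'' an immediate by-product, whereas in your scheme it must be argued separately.

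One point that needs tightening in your version: ``the transitions of $A_{\psi}$ that complete the match'' is not well-defined once $\epsilon$-transitions for $\ceplor$ and $^{+}$ are present, since the last step into a final state may be an $\epsilon$-move that carries no formula. You must either identify the last non-$\epsilon$ transitions along each accepting walk or eliminate $\epsilon$ first. The paper avoids this by anchoring the filter at a concrete variable-carrying transition rather than at the final frontier. Apart from this detail, your plan is sound and yields a correct, somewhat simpler (if less economical) construction.
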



\begin{proof}[Proof and algorithm sketch]
The complete RMA construction algorithm and the full proof for the case of $n$-ary formulas and a single direction may be found in the Appendix.
Here, we first present an example, 
to give the intuition, 
and then present the outline of one direction of the proof.
Let 
\begin{equation}
\label{expression:cepl2rma_example}
\begin{aligned}
\phi_{4} := & 	(T\ \ceplas\ x\ \ceplfilter\ x.\mathit{value}<-40\ \ceplor\ T\ \ceplas\ x\ \ceplfilter\ x.\mathit{value}>50)\ ; \\
		& 	(T\ \ceplas\ y)\ \ceplfilter\ y.\mathit{id} = x.\mathit{id}
\end{aligned}
\end{equation}
be a bounded, core--CEPL expression.
With this expression,
we want to monitor sensors for possible failures.
We want to detect cases where a sensor records temperatures outside some range of values
($x$)
and continues to transmit measurements ($y$),
so that we are alerted to the fact that measurement $y$ might not be trustworthy.
The last $\ceplfilter$ condition is a binary formula,
applied to both $y$ and $x$.
Fig. \ref{fig:cepl2rma:example} shows the process for constructing
the RMA which is equivalent to Expression \eqref{expression:cepl2rma_example}.

\begin{figure}[t]
\begin{centering}
\includegraphics[width=0.8\linewidth]{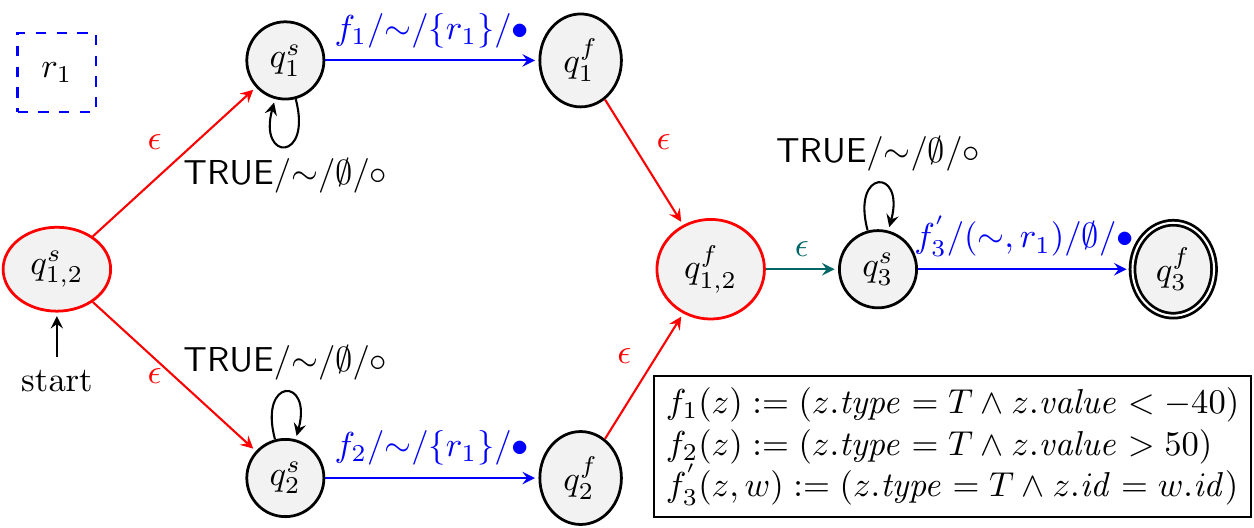}
\caption{Constructing RMA from the CEPL Expression \eqref{expression:cepl2rma_example}.}
\label{fig:cepl2rma:example}
\end{centering}
\end{figure}

\begin{algorithm}
\caption{CEPL to RMA for $n$-ary filter (simplified)}
\label{algorithm:cepl2rma:nary_simplified}
\KwIn{CEPL expression: $\phi = \phi^{'} \ceplfilter\ f(x_{1},\cdots,x_{n})$}
\KwOut{RMA $A_{\phi}$ equivalent to $\phi$}
$A_{\phi^{'}} \leftarrow \mathit{ConstructRMA}(\phi^{'})$\; 
\ForEach{transition $\delta$ of $A_{\phi^{'}}$}{
	\If{$x_{i}$ appears in $\delta$ and all other $x_{j}$ appear in all trails before $\delta$}{ \label{line:cepl2rma:nary_simplified:isvalid}
		add $f$ as a conjunct to the formula of $\delta$\; \label{line:cepl2rma:nary_simplified:append_f}
		\ForEach{transition $\delta_{j}$ before $\delta$}{
			get the variable $x_{j}$ associated with $\delta_{j}$\;
			\uIf{no register is associated with $x_{j}$}{
				create a new register $r_{j}$ associated with $x_{j}$\;
				make $\delta_{j}$ write to $r_{j}$\;
			}
			\Else {
				get register $r_{j}$ associated with $x_{j}$\;
			}
			add $r_{j}$ to the register selection of $\delta$\;
			
		}			
	}
}
return $A_{\phi^{'}}$\; 
\end{algorithm}

The algorithm is compositional,
starting from the base case $\phi {:=} R\ \ceplas\ x\ \ceplfilter\ f$.
The base case and the three regular expression operators (sequence, disjunction, iteration) are handled in a manner almost identical as for Match Automata,
with the exception of the sequence operator ($\phi=\phi_{1};\phi_{2}$),
where some simplifications are made due to the fact that expressions are bounded ($\mathit{var}(\phi_{1})\cap \mathit{var}(\phi_{2})=\emptyset$).
In this proof sketch, we focus on expressions with $n$-ary formulas, 
like $\phi = \phi^{'} \ceplfilter\ f(x_{1},\cdots,x_{n})$.

We first start by constructing the RMA for the base case expressions.
For the example of Fig. \ref{fig:cepl2rma:example},
there are three basic sub-expressions and three basic automata are constructed:
from $q^{s}_{1}$ to $q^{f}_{1}$,
from $q^{s}_{2}$ to $q^{f}_{2}$ and
from $q^{s}_{3}$ to $q^{f}_{3}$.
The first two are associated with variable $x$ of $\phi_{4}$ and the third with $y$.
To the corresponding transitions,
we add the relevant \emph{unary} formulas,
e.g.,
we add $f_{1}(z){:=} (z.\mathit{type}{=}T {\wedge} z.\mathit{value}{<}-40)$ to $q^{s}_{1} {\rightarrow} q^{f}_{1}$.
At this stage, 
since all formulas are unary,
we have no registers.
The \ceplor\ operator is handled by joining the RMA of the disjuncts through new states and $\epsilon$-transitions 
(see the red states and transitions in Fig. \ref{fig:cepl2rma:example}).
The ``$;$'' operator is handled by connecting the RMA of its sub-expressions through an $\epsilon$-transition,
without adding any new states (see the green transition).
Iteration,
not applicable in this example,
is handled by joining the final state of the original automaton to its start state through an $\epsilon$-transition.

Finally, for expressions with an $n$-ary formula 
we do not add any states or transitions.
We only modify existing transitions and possibly add registers,
as per Algorithm \ref{algorithm:cepl2rma:nary_simplified}
(this is a simplified version of Algorithm \ref{algorithm:cepl2rma:nary} in the Appendix).
For the example of Expression \eqref{expression:cepl2rma_example},
this new formula is $y.\mathit{id}{=}x.\mathit{id}$ and the transitions that are modified are shown in blue in Fig. \ref{fig:cepl2rma:example}.
First, we locate the transition(s) where the new formula should be added.
It must be a transition associated with one of the variables of the formula, which, in our example, means either with $x$ or $y$.
But the $x$-associated $q^{s}_{1} {\rightarrow} q^{f}_{1}$ and $q^{s}_{2} {\rightarrow} q^{f}_{2}$ should not be chosen,
since they are located before the $y$-associated $q^{s}_{3} {\rightarrow} q^{f}_{3}$,
if we view the RMA as a graph.
Thus, in a run, 
upon reaching either $q^{s}_{1} {\rightarrow} q^{f}_{1}$ or $q^{s}_{2} {\rightarrow} q^{f}_{2}$, 
the RMA won't have all the arguments necessary for applying the formula.
On the contrary,
the formula must be added to $q^{s}_{3} {\rightarrow} q^{f}_{3}$,
since,
at this transition, 
the RMA will have gone through one of the $x$-associated transitions
and seen an $x$-associated event.
By this analysis,
we can also conclude that events triggering $q^{s}_{1} {\rightarrow} q^{f}_{1}$ and $q^{s}_{2} {\rightarrow} q^{f}_{2}$
must be stored,
so that they can be retrieved when the RMA reaches $q^{s}_{3} {\rightarrow} q^{f}_{3}$.
Therefore, we add a register ($r_{1}$) and make them write to it.
Since these two transitions are in different paths of the same \ceplor\ operator and both refer to a common variable ($x$),
we add only a single register.
We then return back to $q^{s}_{3} {\rightarrow} q^{f}_{3}$ in order to update its formula.
Initially, its unary formula was $f_{3}(z){:=} (z.\mathit{type}{=}T)$.
We now add $r_{1}$ to its register selection and append the binary constraint $y.\mathit{id}{=}x.\mathit{id}$ as a conjunct,
thus resulting in $f^{'}_{3}(z,w) {:=} (z.\mathit{type}{=}T {\wedge} z.\mathit{id} {=} w.\mathit{id})$.

We provide a proof sketch for the case of $n$-ary formulas and for a single direction. 
We show how 
$M \in \llbracket A_{\phi} \rrbracket(S_{i}) \Rightarrow \exists v: M \in \llbracket \phi \rrbracket(S,i,v)$ 
is proven when 
$\phi = \phi^{'} \ceplfilter\ f(x_{1},\cdots,x_{n})$.
First, note that the proof is inductive,
with the induction hypothesis being that what we want to prove holds for the sub-expression $\phi^{'}$,
i.e., 
$M \in \llbracket A_{\phi^{'}} \rrbracket(S_{i}) \Leftrightarrow \exists v^{'}: M \in \llbracket \phi^{'} \rrbracket(S,i,v^{'})$.
We then prove the fact that
$M \in \llbracket A_{\phi} \rrbracket(S_{i}) \Rightarrow M \in \llbracket A_{\phi^{'}} \rrbracket(S_{i})$,
i.e., if $M$ is a match of $A_{\phi}$ then it should also be a match of $A_{\phi^{'}}$,
since $A_{\phi}$ has more constraints on some of its transitions.
We have thus proven the left-hand side of the induction hypothesis.
As a result, we can conclude that its right-hand side also holds,
i.e.,
$\exists v^{'}: M \in \llbracket \phi^{'} \rrbracket(S,i,v^{'})$.
Our goal is to find a valuation $v$ such that 
$M \in \llbracket \phi \rrbracket(S,i,v)$.
We can try the valuation $v^{'}$ that we just found for the sub-expression $\phi^{'}$.
We can show that $v^{'}$ is indeed a valuation for $\phi$ as well.
As per the definition of the CEPL semantics \cite{DBLP:journals/corr/abs-1709-05369},
to do so,
we need to prove two facts:
that $M \in \llbracket \phi^{'} \rrbracket(S,i,v^{'})$, which has just been proven;
and that $v^{'}_{s} \vDash f$, 
i.e., 
that $f$ evaluates to \trueb\ when its arguments are the tuples referenced by $v^{'}$.   
We can indeed prove the second fact as well,
by taking advantage of the fact that $M$ is produced by an accepting run of $A_{\phi}$. 
This run must have gone through a transition where $f$ was a conjunct and thus
$f$ does evaluate to \trueb.
\end{proof}

Note that the inverse direction of Theorem \ref{theorem:cepl2nrmae} is not necessarily true.
RMA are more powerful than bounded, core--CEPL expressions.
There are expressions which are not bounded but could be captured by RMA.
$(T\ \ceplas\ x) ; (H\ \ceplas\ y\ \ceplfilter\ y.id{=}x.id)^{+}$ is such an example.
An automaton for this expression would not need to store any $H$ events.
It would suffice for it to just compare the $id$ of every newly arriving $H$ event with the $id$ of the stored (and single) $T$ event.
A complete investigation of the exact class of CEPL expressions that can be captured with bounded memory is reserved for the future.
The construction algorithm for RMA uses $\epsilon$-transitions. 
As expected, it can be shown that such $\epsilon$-transitions can be removed from a RMA. 
The proof and the elimination algorithm are standard and are omitted.

We now study the closure properties of RMA under union, concatenation, Kleene$+$, complement and determinization.
We first provide the definition for deterministic RMA.
Informally, a RMA is said to be deterministic if it has a single start state and, at any time, with the same input event, it can follow no more than one transition with the same output.
The formal definition is as follows:
\begin{definition}[Deterministic RMA (DRMA)]
\label{definition:deterministic_rma}
Let $A$ be a RMA and $q$ a state of $A$. 
We say that $A$ is deterministic if for all transitions $(q,f_{1},rs_{1})\rightarrow(p_{1},R_{1},o)$,
$(q,f_{2},rs_{2})\rightarrow(p_{2},R_{2},o)$ 
(transitions from the same state $q$ with the same output $o$)
$f_{1}$ and $f_{2}$ are mutually exclusive,
i.e.,
at most one can evaluate to \trueb.
\end{definition}

This notion of determinism is similar to that used for MA in \cite{DBLP:journals/corr/abs-1709-05369}.
According to this notion, the RMA of Fig. \ref{fig:example1} is deterministic, 
since the two transitions from the start state have different outputs.
A DRMA can thus have multiple runs.
We should state that there is also another notion of determinism, 
similar to that in \cite{mohri2004weighted},
which is stricter and can be useful in some cases.
This notion requires at most one transition to be followed,
regardless of the output.
According to this strict definition,
the RMA of Fig. \ref{fig:example1}, e.g., is non-deterministic,
since both transitions from the start state can evaluate to \trueb.
By definition,
for this kind of determinism,
at most one run may exist for every stream.
We will use this notion of determinism in the next section.


We now give the definition for closure under union, concatenation, Kleene$+$, complement and determinization:
\begin{definition}[Closure of RMA]
\label{definition:closure_rmas}
We say that RMA are closed under: 
\begin{itemize}
	\item union if, for every RMA $A_{1}$ and $A_{2}$,
	there exists a RMA $A$ such that $\llbracket A \rrbracket(S) = \llbracket A_{1} \rrbracket(S)\cup \llbracket A_{2} \rrbracket(S)$ for every stream $S$, i.e., $M$ is a match of $A$ iff it is a match of $A_{1}$ or $A_{2}$.
	\item concatenation if, for every RMA $A_{1}$ and $A_{2}$,
	there exists a RMA $A$ such that \linebreak $\llbracket A \rrbracket(S) = \{M: M = M_{1} \cdot M_{2}, M_{1} \in \llbracket A_{1} \rrbracket(S), M_{2} \in \llbracket A_{2} \rrbracket(S) \}$ for every stream $S$, i.e., $M$ is a match of $A$ iff $M_{1}$ is a match of $A_{1}$, $M_{2}$ is a match of $A_{2}$ and $M$ is the concatenation of $M_{1}$ and $M_{2}$ (i.e., $M = M_{1} \cup M_{2}$ and $min(M_{2}) > max(M_{1})$).
	\item Kleene$+$ if, for every RMA $A$,
	there exists a RMA $A^{+}$ such that \linebreak $\llbracket A^{+} \rrbracket(S) = \{M: M = M_{1} \cdot M_{2} \cdots M_{n}, M_{i} \in \llbracket A \rrbracket(S) \}$ for every stream $S$, i.e., $M$ is a match of $A^{+}$ iff each $M_{i}$ is a match of $A$ and $M$ is the concatenation of all $M_{i}$.
	\item complement if, for every RMA $A$,
	there exists a RMA $A^{c}$ such that \linebreak $M \in \llbracket A \rrbracket(S) \Leftrightarrow M \notin \llbracket A^{c} \rrbracket(S)$.
	\item determinization if, for every RMA $A$,
	there exists a DRMA $A^{D}$ such that, \linebreak
	$M \in \llbracket A \rrbracket(S_{i}) \Leftrightarrow M \in \llbracket A^{D} \rrbracket(S_{i})$.
\end{itemize}
\end{definition}


For the closure properties of RMA, we have:
\begin{theorem}
\label{theorem:nrma_closure}
RMA are closed under concatenation, union, Kleene$+$ and determinization, but not under complement.
\end{theorem}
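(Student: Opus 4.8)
The plan is to split Theorem~\ref{theorem:nrma_closure} into the four positive claims, which I would settle by explicit constructions, and the single negative claim, which I would refute by a counterexample. Throughout I rely on the (standard, stated) $\epsilon$-removal result so that operand automata may freely use $\epsilon$-transitions, and I always rename the register sets so that distinct operands use disjoint registers; this is sound because a transition only ever reads its own register selection and matches depend solely on the $\bullet$-outputs, so register names are immaterial to $\llbracket\cdot\rrbracket$. For union, concatenation and Kleene$+$ I would use Thompson-style gluing through fresh $\epsilon$-transitions. For union I add one new start state with $\circ$-output $\epsilon$-transitions into the start states of $A_{1}$ and $A_{2}$, so that every accepting run decomposes uniquely through one operand and $\llbracket A\rrbracket(S)=\llbracket A_{1}\rrbracket(S)\cup\llbracket A_{2}\rrbracket(S)$. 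For concatenation I add $\epsilon$-transitions (output $\circ$) from each final state of $A_{1}$ to each start state of $A_{2}$; since the register sets are disjoint the $A_{2}$-phase cannot read anything written by $A_{1}$, and because the stream index only advances we automatically obtain $\min(M_{2})>\max(M_{1})$, matching Definition~\ref{definition:closure_rmas}. For Kleene$+$ I add $\epsilon$-transitions from each final state back to the start state; the only delicate point is that registers used inside the loop are overwritten on every pass, which is guaranteed by the structural trail-property. In each case the required match-set identity follows by a routine induction on the length of runs, reading off the $\bullet$-marked indices.

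Determinization is the principal obstacle, and the key observation is that Definition~\ref{definition:deterministic_rma} only requires transitions sharing a state \emph{and} an output to be mutually exclusive, not that runs be unique. After $\epsilon$-removal I would perform a subset-style construction in which a state of $A^{D}$ is a set $P\subseteq A.Q$, equipped with private per-thread copies of $A$'s registers; for each output $o\in\{\circ,\bullet\}$ separately I gather the outgoing $o$-transitions of the members of $P$, rewrite each formula over the relevant private copies, and replace this finite family by its minterms, emitting one $o$-transition per satisfiable minterm whose target is the subset reached under it. The minterms for a fixed $o$ are pairwise mutually exclusive, so $A^{D}$ is a DRMA, and since there are at most $2^{|A.Q|}$ subsets and boundedly many minterms per subset the construction terminates. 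The hard part is the register bookkeeping: when two computation threads carrying different $\bullet$-histories reach the same member $w\in P$, a single register copy cannot in general record both histories, and this is exactly the phenomenon that makes RMA \emph{non}-determinizable as recognizers. The resolution must exploit the match-preserving semantics together with the structural trail-property: because we need only reproduce the $\bullet$-marked index sets (rather than collapse everything to one run), and because every register read at $w$ is written on every trail reaching $w$, finitely many register copies suffice to keep separate precisely those threads that can lead to distinct matches. Establishing this bound, and the consequent history-preserving correspondence between accepting runs of $A$ and of $A^{D}$, is where the real work lies, and I expect this step rather than any other to be the main obstacle.

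For non-closure under complement I would exhibit a single counterexample and reduce to the classical failure for register automata. Let $A$ be the one-register RMA that stores one marked event, marks further events freely, and accepts only after reading a marked event whose $\mathit{value}$ equals the stored one; then $M\in\llbracket A\rrbracket(S)$ holds iff $S[M]$ contains two entries of equal value. Any complement automaton $A^{c}$ as in Definition~\ref{definition:closure_rmas} would therefore have to mark $M$ precisely when all of $S[M]$ are pairwise distinct, for matches of unbounded size. I would then argue that no finite-register RMA can meet this specification: fixing $A^{c}$ to have $k$ registers and choosing streams whose first $k{+}1$ events carry distinct values, a fooling-set argument on the at most $k$ values that $A^{c}$ can retain shows it cannot separate an all-distinct marked set from one in which a late event silently duplicates an early one, contradicting its required behaviour. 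This mirrors the non-recognizability of the ``all values distinct'' language by register automata and completes the proof of Theorem~\ref{theorem:nrma_closure}.
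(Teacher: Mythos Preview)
Your handling of union, concatenation, Kleene$+$ and complement aligns with the paper: the first three are obtained there from the Thompson-style compositional construction underlying Theorem~\ref{theorem:cepl2nrmae}, and for complement the paper simply invokes the register-automaton argument, which your explicit ``all-distinct'' counterexample spells out.

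For determinization the paper's construction is materially simpler than what you sketch, and the difference goes to the heart of the obstacle you flag as unresolved. You introduce private per-thread register copies and leave the bookkeeping as the main open step. The paper does \emph{not} duplicate registers at all: the DRMA carries exactly the same register set $\mathit{RG}$ as the input RMA. The mechanism you partly identify---that a DRMA in the sense of Definition~\ref{definition:deterministic_rma} still admits multiple runs, at most one per output sequence---is precisely what makes this work, but you do not push it far enough. A single DRMA run only has to simulate the NRMA runs sharing its output sequence, not all NRMA runs simultaneously. Concretely, for each subset state $q^{D}$ and each output $o$ the paper forms the marked min-terms of the $o$-labelled outgoing formulas; the resulting DRMA transition writes the current tuple to the \emph{union} of the write-registers of the NRMA transitions it encodes (this is the first place a transition's $R$ becomes non-singleton). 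The correctness invariant, proved by induction on run length, is that for every NRMA run $\varrho^{N}$ there is a DRMA run $\varrho^{D}$ with the same outputs, $q_{j}^{N}\in q_{j}^{D}$, and $\gamma_{j}^{N}(r)=\gamma_{j}^{D}(r)$ for every register $r$ that has been written along $\varrho^{N}$. Since, by the structural trail property, every register read at a state is written on every trail to that state, this invariant suffices for the DRMA's min-term to evaluate consistently with the NRMA's formula. The per-thread copies you propose are therefore unnecessary: the paper sidesteps the obstacle you anticipate rather than surmounting it.
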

\begin{proof}[Proof sketch]
The proof for concatenation, union and Kleene$+$ follows from the proof of Theorem \ref{theorem:cepl2nrmae}.
The proof about complement is is essentially the same as that for register automata \cite{kaminski_finite-memory_1994}.
The proof for determinization is presented in the Appendix.
It is constructive and the determinization algorithm is based on the power--set construction of the states of the non--deterministic RMA and is similar to the algorithm for symbolic automata, 
but also takes into account the output of each transition.
It does not add or remove any registers.
It works in a manner very similar to the determinization algorithm for symbolic automata and MA \cite{veanes2010rex,DBLP:journals/corr/abs-1709-05369}.
It initially constructs the power set of the states of the URMA. 
The members of this power set will be the states of the DRMA.
It then tries to make each such new state, say $q^{d}$, deterministic,
by creating transitions with mutually exclusive formulas when they have the same output.
The construction of these mutually exclusive formulas is done
by gathering the formulas of all the transitions that have as their source a member of $q^{d}$.
Out of these formulas,
the set of min-terms is created,
i.e.,
the mutually exclusive conjuncts constructed from the initial formulas,
where each conjunct is a formula in its original or its negated form.
A transition is then created for each combination of a min-term with an output,
with $q^{d}$ being the source.
Then, only one transition with the same output can apply,
since these min-terms are mutually exclusive.
\end{proof}

RMA can thus be constructed from the three basic operators (union, concatenation and Kleene$+$) in a compositional manner,
providing substantial flexibility and expressive power for CEP applications.
However, as is the case for register automata \cite{kaminski_finite-memory_1994}, 
RMA are not closed under complement,
something which could pose difficulties for handling \emph{negation},
i.e.,
the ability to state that a sub-pattern should not happen for the whole pattern to be detected.
We reserve the treatment of negation for future work.

\section{Windowed Expressions and Output--agnostic Determinism}
\label{section:cepl_model:wdrmas}

As already mentioned,
the notion of determinism that we have used thus far allows for multiple runs.
However, there are cases where a deterministic automaton with a single run is needed.
Having a single run offers the advantage of an easier handling of automata that work in a streaming setting,
since no clones need to be created and maintained for the multiple runs.
On the other hand,
deterministic automata with a single run are more expensive to construct before the actual processing can begin and can have exponentially more states than non--deterministic automata.
A more important application of deterministic automata with a single run for our line of work is when we need to \emph{forecast} the occurrence of complex events,
i.e.,
when we need to probabilistically infer when a pattern is expected to be detected
(see \cite{Alevizos2017} for an example of event forecasting,
using classical automata).
In this case,
having a single run allows for a direct translation of an automaton to a Markov chain \cite{nuel_pattern_2008},
a critical step for making probabilistic inferences about the automaton's run-time behavior.
Capturing the behavior of automata with multiple runs through Markov chains could possibly be achieved,
although it could require techniques, 
like branching processes \cite{gallager2012discrete}, 
in order to capture the cloning of new runs and killing of expired runs. 
This is a research direction we would like to explore,
but in this paper we will try to investigate whether a transformation of a non--deterministic RMA to a deterministic RMA with a single run is possible.
We will show that this is indeed possible if we add \emph{windows} to CEPL expressions and ignore the output of the transitions.
Ignoring the output of transitions is a reasonable restriction for forecasting,
since we are only interested about when a pattern is detected and not about which specific input events constitute a match.

We first introduce the notion of output--agnostic determinism:
\begin{definition}[Output--agnostic determinism]
\label{definition:output_free_determinism}
Let $A$ be a RMA and $q$ a state of $A$. 
We say that $A$ is output--agnostic deterministic if for all transitions $(q,f_{1},rs_{1})\rightarrow(p_{1},R_{1},o_{1})$,
$(q,f_{2},rs_{2})\rightarrow(p_{2},R_{2},o_{2})$ 
(transitions from the same state $q$, regardless of the output)
$f_{1}$ and $f_{2}$ are mutually exclusive.
We say that a RMA $A$ is output--agnostic determinizable if there exists an output--agnostic DRMA $A^{D}$ such that, there exists an accepting run $\varrho$ of $A$ over a stream $S$ iff there exists an accepting run $\varrho^{D}$ of $A^{D}$ over $S$. 
\end{definition}
Thus, for this notion of determinism we treat RMA as recognizers and not as transducers.
Note also that, by definition,
an output--agnostic DRMA can have at most one run.

We can show that RMA are not in general determinizable under output--agnostic determinism:
\begin{theorem}
RMA are not determinizable under output--agnostic determinism.
\end{theorem}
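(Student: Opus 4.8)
The plan is to exhibit one fixed non-deterministic RMA whose recognized stream language cannot be matched by any RMA satisfying output--agnostic determinism, and to reduce the argument to the classical failure of determinization for register automata \cite{kaminski_finite-memory_1994}. First I would fix the witness. Let $A$ be the RMA with states $q^{s},q^{1},q^{f}$, a single register $r_{1}$, and transitions $(q^{s},\trueb,(\sim))\rightarrow(q^{s},\emptyset,\circ)$ and $(q^{s},\trueb,(\sim))\rightarrow(q^{1},\{r_{1}\},\bullet)$ (the latter \emph{guesses} an event and stores it), then $(q^{1},\trueb,(\sim))\rightarrow(q^{1},\emptyset,\circ)$ and the matching transition $(q^{1},f^{=},(\sim,r_{1}))\rightarrow(q^{f},\emptyset,\bullet)$ with $f^{=}(z,w):=(z.\mathit{value}=w.\mathit{value})$, and finally a self-loop $(q^{f},\trueb,(\sim))\rightarrow(q^{f},\emptyset,\bullet)$ so the remainder of the stream can be consumed while keeping the last output $\bullet$. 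Then $A$ has an accepting run over a stream $S$ iff two events of $S$ carry the same $\mathit{value}$: the run skips to the first event of a matching pair, stores it, skips to the second, matches into $q^{f}$, and loops to the end. Note that $A$ uses only one register and an equality guard, so it is a non-deterministic register automaton in disguise; the essential non-determinism is the guess of which event to store.

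Next I would assume, toward a contradiction, that there is an output--agnostic DRMA $A^{D}$ that is output--agnostic equivalent to $A$. By Definition \ref{definition:output_free_determinism}, $A^{D}$ admits at most one run per stream, so whether an accepting run exists is a function of $S$ that must coincide with the indicator of ``$S$ contains a repeated value''. Let $k=|A^{D}.\mathit{RG}|$ and let $\mathcal{F}$ be the finite set of guard formulas occurring on its transitions. The heart of the argument is a value--indiscernibility step: by the infinite Ramsey theorem applied to the finite coloring of argument tuples induced by $\mathcal{F}$ (together with a pigeonhole step that places all chosen values on the same side of the finitely many constants appearing in $\mathcal{F}$), I would extract an infinite, densely ordered set $V$ of values that is \emph{indiscernible} for $\mathcal{F}$: the truth value of every $f\in\mathcal{F}$ on arguments drawn from $V$ depends only on the equality--and--order type of those arguments. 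On inputs from $V$, $A^{D}$ thus behaves exactly like a classical deterministic register automaton.

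Now I would feed $A^{D}$ the $k+1$ pairwise distinct values $v_{0},\dots,v_{k}\in V$. Since $A^{D}$ must accept $v_{0}\cdots v_{k}v_{j}$ whenever a repeat is appended, its unique run consumes the whole prefix $v_{0}\cdots v_{k}$ and reaches a configuration $[\,\cdot\,,q,\gamma]$ whose stored values form a subset of $\{v_{0},\dots,v_{k}\}$ of size at most $k$; hence some $v_{j}$ is \emph{not} stored, i.e.\ $v_{j}\neq\gamma(r)$ for every $r$. By density of $V$ I can pick a fresh $v'\in V$, distinct from every $v_{i}$ and lying in the same order--gap as $v_{j}$ relative to the stored values, so that $v_{j}$ and $v'$ have identical equality--and--order type against $\gamma$. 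By indiscernibility, every guard in $\mathcal{F}$ evaluates identically on $v_{j}$ and $v'$ against $\gamma$; since output--agnostic determinism forces at most one applicable transition, reading either value as the final event fires the \emph{same} transition, with the same target state and the same output. Consequently $A^{D}$ accepts $v_{0}\cdots v_{k}v_{j}$ iff it accepts $v_{0}\cdots v_{k}v'$. But the first stream contains the repeat $v_{j}$ and must be accepted, whereas the second consists of pairwise distinct values and must be rejected, a contradiction.

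The step I expect to be the main obstacle is precisely the indiscernibility extraction. With only (in)equality guards the contradiction is the textbook register--automata argument, in which a forgotten value and a fresh value are automatically indistinguishable. The difficulty here is that RMA guards are \emph{arbitrary} formulas over the data domain, including order and membership predicates, so a clever $A^{D}$ might try to separate a repeated value from a genuinely new one. Ramsey's theorem is exactly what neutralizes this: restricting attention to an infinite, dense, indiscernible $V$ collapses every one of the finitely many guards to its equality--and--order behavior, reducing the general symbolic case to the register--automata case. Making rigorous that a \emph{single} set $V$ simultaneously tames all guards of $\mathcal{F}$, and that a same--gap fresh value $v'$ always exists, is the only non-routine part of the plan.
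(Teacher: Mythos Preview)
Your argument is correct and shares its skeleton with the paper's: exhibit a one-register NRMA that non-deterministically guesses which event to store, then use pigeonhole on the $k$ registers of a putative output-agnostic DRMA to find a value it has ``forgotten'' and swap it for a fresh one. The paper's proof is a four-sentence sketch that points at the RMA of Figure~\ref{fig:example1} and asserts that a DRMA with $k$ registers ``is not able to handle'' a stream of $m>k$ events, appealing implicitly to the register-automata result of~\cite{kaminski_finite-memory_1994}; it neither spells out the indistinguishability step nor addresses the possibility that the DRMA's guards are richer than equality.

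The genuine addition in your version is the Ramsey pass. You correctly observe that RMA guards are arbitrary selection formulas, so a forgotten value and a fresh value are not a~priori indistinguishable; collapsing the finitely many guards of $A^{D}$ to their order-and-equality behaviour on an infinite indiscernible $V$ is exactly what is needed to reduce to the classical register-automata argument. This closes a gap the paper's sketch leaves open, at the cost of a heavier combinatorial tool.

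One technical caveat: Ramsey by itself does not hand you a \emph{densely ordered} $V$; the homogeneous set may well be order-isomorphic to $\omega$. You do not actually need density. It suffices, when choosing the test prefix $v_{0},\dots,v_{k}$ from $V$, to interleave spare elements (e.g.\ take $v_{i}$ to be the $(2i)$-th element of $V$). Then for whichever $v_{j}$ is missing from the stored set $S$, the spare element immediately below $v_{j}$ in $V$ lies strictly above the largest element of $S$ below $v_{j}$ (since that element is at most $v_{j-1}$) and strictly below $v_{j}$, hence sits in the same $S$-gap and is fresh. With that adjustment your argument goes through cleanly; the ``same-gap fresh value'' you flagged as the main obstacle is in fact routine once density is replaced by this interleaving.
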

\begin{proof}[Proof sketch]
Consider the RMA of Fig. \ref{fig:example1}.
For a stream of $m$ $T$ events, followed by one $H$ event with the same $id$,
this RMA detects $m$ matches,
regardless of the value of $m$,
since it is non-deterministic.
It can afford multiple runs and create clones of itself upon the appearance of every new $T$ event.
On the other hand,
an output--agnostic DRMA with $k$ registers is not able to handle such a stream in the case of $m>k$,
since it can have only a single run and can thus remember at most $k$ events.
\end{proof}

We can overcome this negative result,
by using windows in CEPL expressions and RMA.
In general,
CEP systems are not expected to remember every past event of a stream
and produce matches involving events that are very distant.
On the contrary,
it is usually the case that CEP patterns include an operator that limits the search space of input events,
through the notion of windowing.
This observation motivates the introduction of windowing in CEPL.

\begin{definition}[Windowed CEPL expression]
\label{definition:windowed_expressions}
A windowed CEPL expression is an expression of the form
$\phi := \psi\ \ceplwin\ w$,
where $\psi$ is a core--CEPL expression,
as in Definition \ref{definition:core_cepl_grammar},
and $w \in \mathbb{N}: w>0$. 
Given a match $M$, a stream $S$, and an
index $i \in \mathbb{N}$, we say that $M$ belongs to the evaluation
of $\phi:= \psi\ \ceplwin\ w$ over $S$ starting at $i$ and under the valuation
$v$, 
if $M \in \llbracket \psi \rrbracket(S,i,v)$ and $\mathit{max}(M)-\mathit{min}(M)<w$.
\end{definition}

The \ceplwin\ operator does not add any expressive power to CEPL.
We could use the index of an event in the stream as an event attribute
and then add \ceplfilter\ formulas in an expression which ensure that the difference between the index of the last event read and the first is no greater that $w$.
It is more convenient, however, to have an explicit operator for windowing.

It is easy to see that for windowed expressions we can construct an equivalent RMA.
In order to achieve our final goal,
which is to construct an output--agnostic DRMA,
we first show how we can construct a so-called unrolled RMA from a windowed expression:
\begin{lemma}
\label{lemma:unrolled_rma}
For every bounded and windowed core--CEPL expression there exists an equivalent unrolled RMA,
i.e., a RMA without any loops, except for a self-loop on the start state.
\end{lemma}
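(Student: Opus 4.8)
The plan is to obtain the unrolled \rma\ by first building an ordinary \rma\ for the underlying core--CEPL expression $\psi$ and then exploiting the window bound to replace every loop by a bounded cascade of copies. Concretely, I would first apply Theorem~\ref{theorem:cepl2nrmae} to construct a \rma\ $A_{\psi}$ equivalent to $\psi$, and then remove its $\epsilon$-transitions (possible, as noted after Theorem~\ref{theorem:cepl2nrmae}), so that every transition other than at the start state consumes exactly one event. In $A_{\psi}$ the only source of cycles is the set of back-edges introduced for the $^{+}$ operator (joining a sub-expression's final state to its start state), together with the skip self-loop on the global start state $q^{s}$. This latter self-loop, which consumes an event and outputs $\circ$, is precisely the mechanism by which events preceding the first marked event of a match are absorbed, and it is the only loop we are allowed to keep.

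The key quantitative observation is that the window $w$ bounds the length of the productive part of any accepting run. In an accepting run the last output is $\bullet$, so the run ends exactly at index $\mathit{max}(M)$, while by construction the first transition leaving $q^{s}$ marks the first event of the match, i.e.\ index $\mathit{min}(M)$; all earlier events are consumed by the $q^{s}$ self-loop. Since Definition~\ref{definition:windowed_expressions} requires $\mathit{max}(M)-\mathit{min}(M)<w$, at most $w$ events (marked or skipped) can be consumed from the moment the run leaves $q^{s}$ until it halts. Hence, after the waiting phase at $q^{s}$, every accepting run performs at most $w$ event-consuming transitions.

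Given this bound, I would unroll $A_{\psi}$ into $w$ layers. Layer $k$ (with $1\le k\le w$) holds a private copy of every non-start state of $A_{\psi}$ and represents the situation ``$k$ events have been consumed since leaving $q^{s}$''. The start state $q^{s}$ together with its self-loop is kept unchanged as layer $0$; each non-self-loop transition out of $q^{s}$ is redirected into layer $1$; every event-consuming transition $q\to p$ of $A_{\psi}$ is replicated as a transition from the copy of $q$ in layer $k$ to the copy of $p$ in layer $k+1$; and all transitions that would leave layer $w$ are simply dropped. A copy of a final state is made final in every layer $1\le k\le w$. The register set \rg, the register selections, the write sets and the outputs are carried over verbatim, so no registers are added or removed and the ``every register appears in every trail'' property is inherited by the layered trails. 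In the resulting \rma\ the only cycle is the self-loop on $q^{s}$: every other transition strictly increases the layer index, which is capped at $w$, so the layers form a directed acyclic graph and no further loop can arise (this is also where $\epsilon$-elimination matters, since it guarantees there are no $\epsilon$-cycles hiding inside a single layer).

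Finally, I would check equivalence by pairing runs. Every accepting run of $A_{\psi}$ whose match respects the window maps to an accepting run of the layered automaton by recording, at each step, the number of events consumed since $q^{s}$ as the layer index; conversely every accepting run of the layered automaton projects back to an accepting run of $A_{\psi}$ with the identical sequence of outputs, and therefore the identical set of marked indices. Runs discarded by the layer cap are exactly those with $\mathit{max}(M)-\mathit{min}(M)\ge w$, so $\llbracket A_{\mathrm{unrolled}}\rrbracket(S)$ coincides with the matches of $\psi\ \ceplwin\ w$ for every stream $S$. I expect the main obstacle to be the second paragraph: rigorously justifying that all unbounded skipping is confined to the $q^{s}$ self-loop, so that ``events consumed after leaving $q^{s}$'' is genuinely bounded by $w$. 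This hinges on the structural facts that leaving $q^{s}$ always marks $\mathit{min}(M)$ and that each $^{+}$ iteration consumes at least one event; both must be read off carefully from the construction of Theorem~\ref{theorem:cepl2nrmae} before the layer count $w$ can be trusted.
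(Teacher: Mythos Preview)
Your plan is essentially the paper's own: build $A_{\psi}$ via Theorem~\ref{theorem:cepl2nrmae}, remove $\epsilon$-transitions, then use the window bound to unroll every run to depth at most $w$, keeping the $\trueb/\circ$ self-loop on $q^{s}$. The correctness argument (bijective correspondence of runs, preservation of outputs and hence of matches) is the same as well.

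Two points are worth noting. First, your claim that the only cycles are the $^{+}$ back-edges and the $q^{s}$ self-loop is not quite right: the construction also places a $\trueb/\circ$ self-loop on every intermediate state (see the loop on $q^{1}$ in Fig.~\ref{fig:example1}) to implement the non-contiguous semantics. Your layering handles these anyway, since every event-consuming transition---including self-loops---advances the layer index, and your length bound via $\mathit{max}(M)-\mathit{min}(M)<w$ covers the skipped events between $\mathit{min}(M)$ and $\mathit{max}(M)$ regardless of where the skipping happens. So the ``obstacle'' you flag in the last paragraph is lighter than you fear: you need that the first transition leaving $q^{s}$ outputs $\bullet$, not that intermediate skipping is absent.

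Second, you keep $\mathit{RG}$ unchanged across layers, whereas the paper's Algorithm~\ref{algorithm:unroll_cyclesk} creates a fresh register for every write in the unrolled automaton and uses, in each register selection, the most recent copy along the current trail. For the lemma as stated your simpler choice is fine: a single run of your layered automaton overwrites a shared register exactly as the corresponding run of $A_{\psi}$ does, so the match sets coincide. The paper's extra registers are there for the next step (Corollary~\ref{corollary:wcepl2drma}): output-agnostic determinization merges several \rma\ runs into one, and if those runs share a register that they write at different layers, the merged run cannot retain all the needed values. So your unrolled \rma\ proves Lemma~\ref{lemma:unrolled_rma} but would not feed directly into the determinization argument without reinstating the per-layer register copies. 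The paper also unrolls into a tree (distinct walks never reconverge) rather than your layered DAG; both are acyclic away from $q^{s}$, but the tree shape plus fresh registers is what lets each trail carry its own storage into the subset construction.
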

\begin{proof}[Algorithm sketch]
The full proof and the complete construction algorithm are presented in the Appendix.
Here, we provide only the general outline of the algorithm and an example.
Consider, e.g., the expression $\phi_{4} {:=} \phi_{1}\ \ceplwin\ w$,
a windowed version of Expression \eqref{expression:t_seq_h_filter_eq_id}.
Fig. \ref{fig:det:example_unrolled} shows the steps taken for constructing the equivalent unrolled RMA for this expression.
A simplified version of the determinization algorithm is shown in Algorithm \ref{algorithm:unrolling_simplified}.

\begin{algorithm}
\caption{Constructing unrolled RMA for windowed expression (simplified).}
\label{algorithm:unrolling_simplified}
\SetAlgoNoLine
\KwIn{Windowed core-CEPL expression $\phi := \psi\ \ceplwin\ w$}
\KwOut{Deterministic RMA $A_{\phi}$ equivalent to $\phi$}
$A_{\psi,\epsilon} \leftarrow \mathit{ConstructRMA}(\psi)$\; \label{algorithm:unrolling_simplified:line:rma} 
$A_{\psi} \leftarrow \mathit{EliminateEpsilon}(A_{\psi,\epsilon})$\; \label{algorithm:unrolling_simplified:line:eliminate} 
enumerate all walks of $A_{\psi}$ of length up to $w$; \label{algorithm:unrolling_simplified:line:unroll_start} \tcp{{\footnotesize Now unroll $A_{\psi}$ (Algorithm \ref{algorithm:unroll_cyclesk}).}}\
join walks through disjunction\;
collapse common prefixes\;
add loop-state with \trueb\ predicate on start state \label{algorithm:unrolling_simplified:line:unroll_end}\;
\end{algorithm}

The construction algorithm first produces a RMA as usual,
without taking the window operator into account
(see line \ref{algorithm:unrolling_simplified:line:rma} of Algorithm \ref{algorithm:unrolling_simplified}).
For our example,
the result would be the RMA of Fig. \ref{fig:example1}.
Then the algorithm eliminates any $\epsilon$-transitions (line \ref{algorithm:unrolling_simplified:line:eliminate}).
The next step is to use this RMA in order to create the equivalent unrolled RMA (URMA).
The rationale behind this step is that the window constraint essentially imposes an upper bound on the number of registers that would be required for a DRMA.
For our example,
if $w{=}3$,
then we know that we will need at least one register,
if a $T$ event is immediately followed by an $H$ event.
We will also need at most two registers,
if two consecutive $T$ events appear before an $H$ event.
The function of the URMA is to create the number of registers that will be needed,
through traversing the original RMA. 
Algorithm \ref{algorithm:unrolling_simplified} does this by enumerating all the walks of length up to $w$
on the RMA graph,
by unrolling any cycles.
Lines \ref{algorithm:unrolling_simplified:line:unroll_start} -- \ref{algorithm:unrolling_simplified:line:unroll_end} of Algorithm \ref{algorithm:unrolling_simplified} show this process in a simplified manner.
The URMA for our example is shown in Fig. \ref{fig:det:example_unrolled} for $w{=}2$ and $w{=}3$.
The actual algorithm does not perform an exhaustive enumeration, 
but incrementally creates the URMA,
by using the initial RMA as a generator of walks.
Every time we expand a walk,
we add a new transition, a new state and possibly a new register,
as clones of the original transition, state and register.
In our example,
we start by creating a clone of $q^{s}$ in Fig. \ref{fig:example1},
also named $q^{s}$ in Fig. \ref{fig:det:example_unrolled}.
From the start state of the initial RMA, 
we have two options.
Either loop in $q^{s}$ through the \trueb\ transition
or move to $q^{1}$ through the transition with the $f$ formula.
We can thus expand $q^{s}$ of the URMA with two new transitions:
from $q^{s}$ to $q_{t}$ and from $q^{s}$ to $q_{f}$ in Fig. \ref{fig:det:example_unrolled}.
We keep expanding the RMA this way until we reach final states and without exceeding $w$.
As a result, the final URMA has the form of a tree, 
whose walks and runs are of length up to $w$.
Finally, we add a \trueb\ self-loop on the start state
(not shown in Fig. \ref{fig:det:example_unrolled} to avoid clutter),
so that the RMA can work on streams.
This loop essentially allows the RMA to skip any number of events and start detecting a pattern at any stream index.
\end{proof}

\begin{figure}[t]
    \centering
    \begin{subfigure}[b]{0.80\textwidth}
        \includegraphics[width=\textwidth]{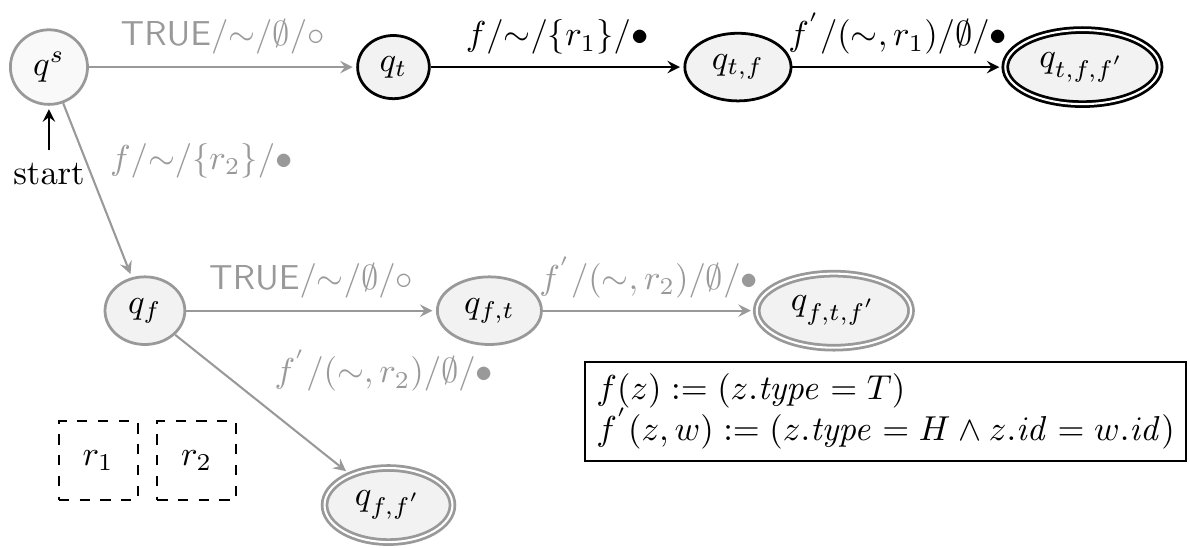}
        \caption{RMA after unrolling cycles, for $w=3$ (whole RMA, black and light gray states) and $w=2$ (top 3 states in black).}\label{fig:det:example_unrolled}
    \end{subfigure}
    ~ 
    \begin{subfigure}[b]{1.0\textwidth}
        \includegraphics[width=\textwidth]{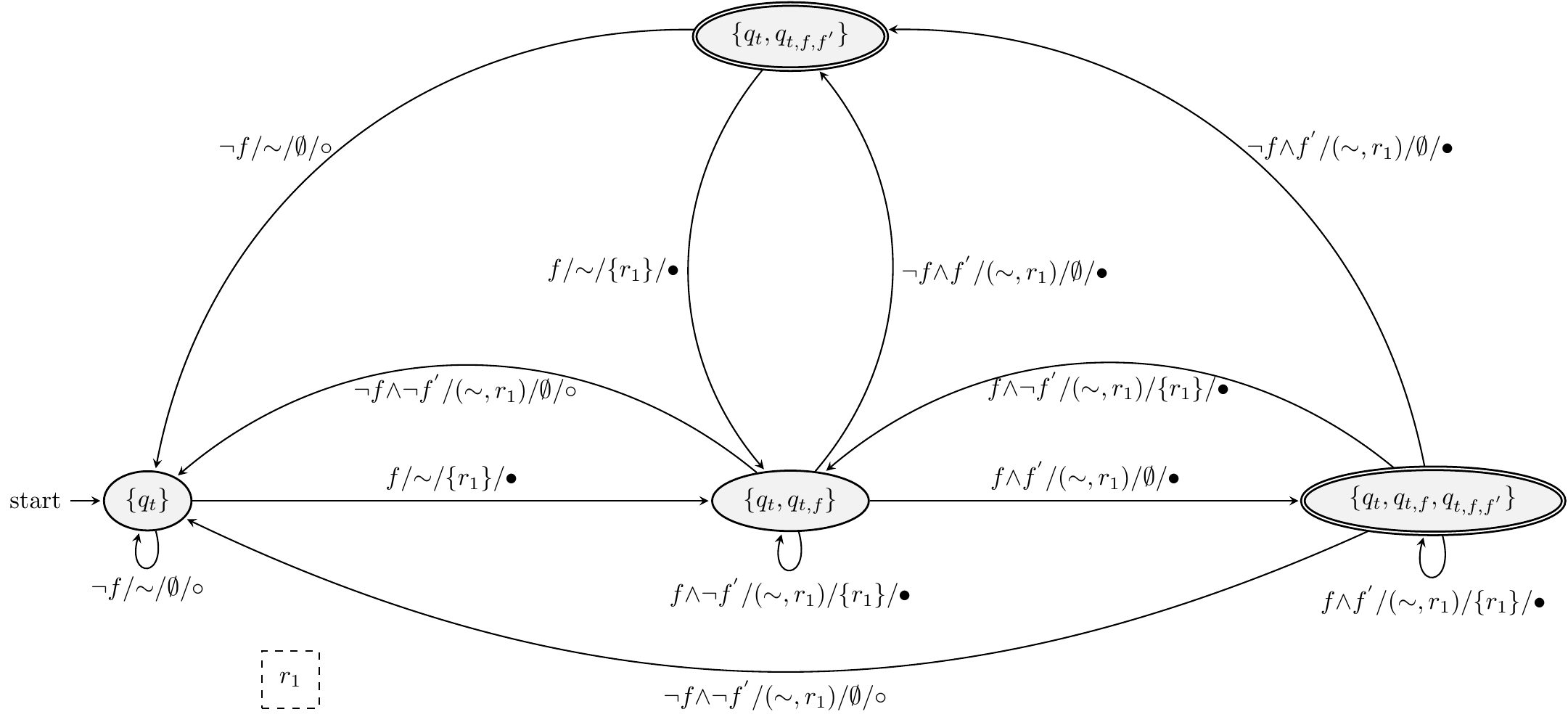}
        \caption{Output--agnostic DRMA, for $w=2$.}\label{fig:det:output_free}
    \end{subfigure}
    \caption{Constructing DRMA for $\phi_{1}\ \ceplwin\ w$.}\label{fig:det:example}
\end{figure}

A URMA then allows us to capture windowed expressions.
Note though that the algorithm we presented above,
due to the unrolling operation,
can result in a combinatorial explosion of the number of states of the DRMA,
especially for large values of $w$.
Its purpose here was mainly to establish Lemma \ref{lemma:unrolled_rma}.
In the future, 
we intend to explore more space-efficient techniques for constructing RMA equivalent to windowed expressions,
e.g.,
by incorporating directly the window constraint as a formula in the RMA.

Having a URMA makes it easy to subsequently construct an output--agnostic DRMA:
\begin{corollary}
\label{corollary:wcepl2drma}
Every URMA constructed from a bounded and windowed core--CEPL expression is output--agnostic determinizable.
\end{corollary}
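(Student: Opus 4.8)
The plan is to obtain $A^{D}$ by running the power--set determinization of Theorem \ref{theorem:nrma_closure} on the URMA produced by Lemma \ref{lemma:unrolled_rma}, but applied in an \emph{output--agnostic} fashion: when forming the mutually exclusive formulas of a determinized state I would gather the formulas of \emph{all} outgoing transitions of the states in the subset, ignoring their outputs $o$, rather than grouping them per output as in the transducer case. Since the URMA is finite and its determinization neither adds nor removes registers, this yields a RMA with finitely many states over the same (bounded) register set. The reason this does not contradict the negative result is structural: the URMA is acyclic apart from the start self--loop, so only boundedly many events are ever ``in play''.

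First I would carry out the subset construction. The states of $A^{D}$ are subsets of the states of the URMA, and from a subset state $q^{d}$ I form the set of min--terms of $\{f : (q,f,rs)\to(p,R,o)\ \text{is a transition with}\ q \in q^{d}\}$, i.e.\ the conjunctions in which each collected formula appears in its original or its negated form. For each min--term I create a single transition out of $q^{d}$ whose target is the subset of URMA states reachable under that min--term. Because the min--terms are pairwise mutually exclusive and I have collapsed transitions regardless of output, at most one of them can evaluate to \trueb\ on any event; hence $A^{D}$ satisfies Definition \ref{definition:output_free_determinism} and, by construction, admits at most one run on every stream.

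It then remains to prove recognizer--equivalence: an accepting run of the URMA over $S$ exists iff an accepting run of $A^{D}$ over $S$ exists. As usual, the subset appearing in the single run of $A^{D}$ after reading a prefix should be exactly the set of URMA states reachable by \emph{some} run on that prefix, so that a final URMA state is reachable iff the $A^{D}$ run enters a subset containing it; this gives both directions of the ``exists an accepting run'' equivalence.

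The subtle point — and the step I expect to be the main obstacle — is that a non--deterministic URMA run carries its own register valuation $\gamma$, whereas $A^{D}$ has a single valuation, so I must show that this single valuation faithfully encodes the register contents of \emph{every} simultaneously active URMA thread; this is precisely what fails in the unwindowed case. Here I would exploit the tree structure guaranteed by Lemma \ref{lemma:unrolled_rma}: after $\epsilon$--elimination every non--start transition consumes exactly one event, so each non--start state $q$ sits at a fixed depth $d(q)$ and, at stream position $t$, can be occupied only by the unique thread that left the start state $d(q)$ steps earlier. Thus at every step at most one thread resides in each state, and the events written along the unique root--to--$q$ path are stored in that branch's \emph{cloned} registers, disjoint from those of other branches. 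Consequently the registers demanded by the active threads never collide, the single $\gamma$ of $A^{D}$ is simply their union, and every formula is evaluated against exactly the contents the corresponding URMA thread would have supplied. Since the window $w$ bounds $d(q)$, and hence the number of such branches and registers, the argument goes through with bounded memory — exactly the obstruction removed by passing from a general RMA to a windowed URMA.
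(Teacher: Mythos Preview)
Your overall plan coincides with the paper's: apply the power-set construction of Theorem~\ref{theorem:nrma_closure} to the URMA of Lemma~\ref{lemma:unrolled_rma}, forming plain min-terms instead of marked ones so that outputs are ignored. The paper's sketch stops there, asserting that recognizer-equivalence ``is then the same as the proof for standard determinization''; you go further and try to argue explicitly why a single register valuation $\gamma$ in $A^{D}$ can serve all active URMA threads, which is exactly where the output-agnostic construction departs from the output-aware one.

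That extra justification, however, has a gap. Your claim that ``the registers demanded by the active threads never collide'' rests on the cloned registers of different \emph{branches} being disjoint. That is true, but it does not cover two threads on the \emph{same} root--to--leaf path at different depths. Because of the start-state self-loop, a fresh thread enters the tree at every step; if one thread is currently at depth $d_{2}$ and another (which entered later) is at depth $d_{1}<d_{2}$ on the same path, both have executed the identical prefix of transitions up to depth $d_{1}$ and have therefore written to the \emph{same} cloned registers --- unrolling creates one fresh register per transition, not one per thread. In $A^{D}$'s single $\gamma$, the shallower thread's later write overwrites the deeper thread's earlier one, so when the deeper thread next reads that register it sees the wrong event. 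Concretely, take $\phi_{1}\ \ceplwin\ 3$ on the stream $(T,1,\cdot),(T,2,\cdot),(H,1,\cdot)$: the URMA thread entering at step~$0$ stores $(T,1,\cdot)$ in $r_{a}$, skips the second $T$, and accepts on the $H$; but in your $A^{D}$ the thread entering at step~$1$ overwrites $r_{a}$ with $(T,2,\cdot)$, so the $f'(\sim,r_{a})$ test on the $H$ event compares $\mathit{id}$s $1$ and $2$, fails, and $A^{D}$ reaches no final state. The tree structure bounds the \emph{number} of registers, but does not by itself prevent this interference, so ``$\gamma$ is simply their union'' breaks down precisely here. The paper's sketch does not engage with this point either --- it simply defers to the output-aware argument --- but your explicit disjointness claim, as stated, is the step that fails.
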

\begin{proof}[Proof sketch]
In order to convert a URMA to an output--agnostic DRMA we modify the determinization algorithm so that the transition outputs are not taken into account.
Min--terms are constructed as in symbolic automata.
The proof about an accepting run of the URMA existing iff an accepting run of the output-agnostic DRMA exists is then the same as the proof for standard determinization.
The difference is that we cannot extend the proof to also state that the matches of the two RMA are the same,
since agnostic--output DRMA have a single--run and produce a single match,
whereas URMA produce multiple matches.
\end{proof}

As an example,
Fig. \ref{fig:det:output_free} shows the result of converting the URMA of Fig. \ref{fig:det:example_unrolled} to an output--agnostic DRMA
(only for $w{=}2$, due to space limitations).
We have simplified somewhat the formulas of each transition due to the presence of the \trueb\ predicates in some of them.
For example, the min-term $f {\wedge} \neg \trueb$ for the start state is unsatisfiable and can be ignored while $f {\wedge} \trueb$ may be simplified to $f$.
Note that,
as mentioned,
although the RMA of Figures \ref{fig:det:example_unrolled} and \ref{fig:det:output_free} are equivalent when viewed as recognizers,
they are not with respect to their matches.
For example, 
a stream of two $T$ events followed by an $H$ event
will be correctly recognized by both the URMA and the output--agnostic DRMA,
but the former will produce a match involving only the second $T$ event and the $H$ event,
whereas the latter will mark both $T$ events and the $H$ event.
However,
our final aim to construct a deterministic RMA with a single run that correctly detects when a pattern is completed has been achieved.

\section{Summary and Further Work}
\label{section:outro}

We presented an automaton model, RMA, that can act as a computational model for CEPL expressions with $n$-ary formulas,
which are quintessential for practical CEP applications. 
RMA thus extend the expressive power of MA and symbolic automata.
They also extend the expressive power of register automata,
through the use of formulas that are more complex than (in)equality predicates.
RMA have nice compositional properties, 
without imposing severe restrictions on the use of operators.
A significant fragment of core--CEPL expressions may be captured by RMA.
Moreover, we showed that outout--agnostic determinization is also possible,
if a window operator is used,
a very common feature in CEP.

As future work,
besides what has already been mentioned,
we need to investigate the class of CEPL expressions that can be captured by RMA,
since RMA are more expressive than bounded CEPL expressions.
We also intend to investigate how the extra operators (like negation) and the selection \emph{strategies} of CEPL may be incorporated.
We have presented here results about some basic closure properties. 
Other properties (e.g., decidability of emptiness, universality, equivalence, etc) remain to be determined,
although it is to be expected that RMA,
being more expressive than symbolic and register automata,
will have more undesirable properties in the general case,
unless restrictions are imposed,
like windowing, 
which helps in determinization. 
We also intend to do complexity analysis on the algorithms presented here and on the
behavior of RMA.
Last but not least,
it is important to investigate the relationship between RMA and other similar automaton models,
like automata in sets with atoms \cite{bojanczyk2011automata} and Quantified Event Automata \cite{barringer2012quantified}.

\bibliography{refs}

\appendix
\section{Appendix}
\label{section:appendix}

\begin{proof}[Proof of Theorem \ref{theorem:cepl2nrmae}]
\label{proof:theorem:cepl2nrmae:nary:if:short}
The proof is inductive and the algorithm compositional,
starting from the base case where $\phi := R\ \ceplas\ x\ \ceplfilter\ f(x)$.
Besides the base case,
there are four other cases to consider:
three for concatenation, union and Kleene$+$ and one more for filters with $n$-ary formulas.
The proofs and algorithms for the first four cases are very similar to the ones for Match Automata \cite{DBLP:journals/corr/abs-1709-05369}.
Here, we present the full proof for $n$-ary formulas and for one direction only, 
i.e.,
we prove the following:
$M \in \llbracket A_{\phi} \rrbracket(S_{i}) \Rightarrow \exists v: M \in \llbracket \phi \rrbracket(S,i,v)$ for $\phi := \phi^{'} \ceplfilter\ f(x_{1},\cdots,x_{n})$.
Algorithm \ref{algorithm:cepl2rma:nary} is the construction algorithm for this case.

\begin{algorithm}
\caption{CEPL to RMA for $n$-ary filter.}
\label{algorithm:cepl2rma:nary}
\SetAlgoNoLine
\KwIn{CEPL expression: $\phi = \phi^{'} \ceplfilter\ f(x_{1},\cdots,x_{n})$}
\KwOut{RMA $A$ equivalent to $\phi$ (and functions $\delta x$, $rx$)}
$(A_{\phi^{'}}, \delta x_{\phi^{'}}, rx_{\phi^{'}}) \leftarrow \mathit{ConstructRMA}(\phi^{'})$\;
$RG \leftarrow A_{\phi^{'}}.RG$\;
$rx \leftarrow rx_{\phi^{'}}$\;
\ForEach{$\delta \in A_{\phi^{'}}.\Delta$}{
	\If{$\exists x_{i} \in var(f): \delta x(\delta)=x_{i} \wedge \forall x_{j} \in var(f),\ x_{j}\ appears\ in\ every\ trail\ to\ \delta.\mathit{source} $}{ \label{line:cepl2rma:nary:isvalid}
		$\delta.f \leftarrow \delta.f \wedge f(xf_{1},\cdots,xf_{n})$\; \label{line:cepl2rma:nary:append_alpha}
		\tcc{{\footnotesize see Algorithm \ref{algorithm:cepl2rma:nary:create_new_rs} for $\mathit{CreateNewRs}$.}}\
		$(rs_{new}, RG_{new}, rx_{new}) \leftarrow \mathit{CreateNewRs}(A_{\phi^{'}}, \delta x_{\phi^{'}}, rx_{\phi^{'}}, \delta, f(x_{1}, \cdots, x_{n}))$\; \label{line:cepl2rma:nary:create_new_rs}
		$\delta.rs \leftarrow \delta.rs :: rs_{new}$\; \label{line:cepl2rma:nary:append_rs}
		$RG \leftarrow RG \cup RG_{new}$\;
		$rx \leftarrow rx \cup rx_{new}$\;
	}
}
$A \leftarrow (A_{\phi^{'}}.Q, A_{\phi^{'}}.q^{s}, A_{\phi^{'}}.Q^{f}, RG, A_{\phi^{'}}.\Delta)$\;
return $(A, \delta x_{\phi^{'}}, rx)$\; 
\end{algorithm}

\begin{algorithm}
\caption{$\mathit{CreateNewRs}$.}
\label{algorithm:cepl2rma:nary:create_new_rs}
\KwIn{A RMA $A$ (with functions $\delta x$ and $rx$), a transition $\delta$ and a formula $f(x_{1},\cdots,x_{2})$}
\KwOut{A new register selection $rs_{new}=(r_{1},\cdots,r_{n})$, a set of new registers $RG_{new}$ and a new function $rx_{new}$ for any new registers.
(also some transitions possibly modified).
}
$rs_{new} \leftarrow ()$\;
$RG_{new} \leftarrow \emptyset$\;
$rx_{new} \leftarrow \emptyset$\;
\ForEach{$x_{k} \in var(f)$}{
	\uIf{$\delta x(\delta)=x_{k}$}{
		$rs_{new} \leftarrow rs_{new} :: \sim$\;
	}
	\uElseIf{$x_{k} \in \mathit{range}(rx)$}{
		$rs_{new} \leftarrow rs_{new} :: rx^{-1}(x_{k})$\; \label{line:cepl2rma:nary:create_new_rs:r_exists}
	}
	\Else{
		$r_{new} \leftarrow \mathit{CreateNewRegister}()$\;
		$RG_{new} \leftarrow RG_{new} \cup \{r_{new}\}$\;
		$rs_{new} \leftarrow rs_{new} :: r_{new}$\;
		\ForEach{$\delta \in \delta x^{-1}(x_{k})$} { \label{line:cepl2rma:nary:create_new_rs:every_delta}
			$\delta.r \leftarrow r_{new}$\; \label{line:cepl2rma:nary:create_new_rs:new_r}
		}
		$rx_{new} \leftarrow rx_{new} \cup \{r_{new} \rightarrow x_{k}\}$\; \label{line:cepl2rma:nary:create_new_rs:new_assignment}
	}
}
return $(rs_{new},RG_{new},rx_{new})$\;
\end{algorithm}

First, some preliminary definitions are required.
During the construction of a RMA from a CEPL expression,
we keep and update two functions,
referring to the variables of the initial CEPL expression and how these are related to the transitions and registers of the RMA:
First, the partial function $\delta x: \Delta \rightharpoonup \mathbf{X}$,
mapping the transitions of the RMA to the variables of the CEPL expression;
Second, the total function $rx: RG \rightarrow \mathbf{X}$,
mapping the registers of the RMA to the variables of the CEPL expression.
With a slight abuse of notation, 
we will also sometimes use the notation $f^{-1}(y)$ to refer to all the domain elements of $f$ that map to $y$.

We also present some further properties that we will need to track
when constructing a RMA $A_{\phi}$ from a CEPL expression $\phi$.
At every inductive step,
we assume that the following properties hold for sub-expressions of $\phi$ and sub-automata of $A_{\phi}$,
except for the base case where it is directly proven that the properties hold.
At the end of every step,
we need to prove that these properties continue to hold for $\phi$ and $A_{\phi}$ as well.
The details of these proofs are omitted,
except for the case of $n$-ary formulas that we present here.

\begin{property}
\label{property:bound_variables_appearance}
For every walk $w$ induced by an accepting run and for every $x \in\ bound(\phi)$, $x$ appears exactly once in $w$.
Moreover, there also exists a trail $t$ contained in $w$ such that,
for every $x \in\ bound(\phi)$, $x$ appears exactly once in $t$.
\end{property}

\begin{property}
\label{property:registers_valuation}
Assume $M \in \llbracket A_{\phi} \rrbracket$, i.e., $\exists \varrho: match(\varrho)=M$, and $\exists v: M \in \llbracket \phi \rrbracket(S,i,v)$.
Let $\varrho$ be:
\begin{equation*}
\varrho = \cdots [i,q_{i},\gamma_{i}] \overset{\delta_{i}/o_{i}}{\rightarrow} [i+1,q_{i+1},\gamma_{i+1}] \cdots
\end{equation*}
and let $t_{i}=v_{S}(x_{b})$ be the tuple of the stream assigned to $x_{b} \in var(\phi)$ through valuation $v$.
Then, the following relationships hold between $v$ and $\varrho$:
\begin{itemize}
	\item For the transition $\delta_{i}$ that ``consumed'' $t_{i}$, it holds that
$\delta x(\delta_{i})=x_{b}$.
	\item Moreover, if $x_{b} \in bound(\phi)$ and is assigned to a register ($x_{b} \in \mathit{range}(rx)$ with $rx(r_{b})=x_{b}$), 
then, for each $\gamma_{j}$, it holds that
\begin{equation*}
\gamma_{j}(r_{b}) =
  \begin{cases}
  \sharp & j \leq i \\
  t_{i} &  j > i
  \end{cases}
\end{equation*}

\end{itemize}
\end{property}
In other words, an event from the stream associated with variable $x$ can only trigger transitions associated with this same variable.
Additionally, if $x$ is associated with a register,  
then the event will be written to that register once at position $i$.

\begin{property}
\label{property:unassigned_variables}
If $x \notin \mathit{range}(rx)$, then $\forall \delta \in \delta x^{-1}(x)$: $\delta.R=\emptyset$.
If $x \in \mathit{range}(rx)$, then $\forall \delta_{i},\delta_{j} \in \delta x^{-1}(x)$: $\delta_{i}.R=\delta_{j}.R \neq \emptyset$.
\end{property}
In other words, if a variable $x$ has not been assigned to a register,
all transitions associated with this variable do not write to any registers.
If a variable $x$ has been assigned to a register,
then all transitions associated with this variable write to the same register.

We first prove the fact that (detailed proof omitted):
$M \in \llbracket A_{\phi} \rrbracket(S_{i}) \Rightarrow M \in \llbracket A_{\phi^{'}} \rrbracket(S_{i})$
i.e., if $M$ is a match of $A_{\phi}$ then it should also be a match of $A_{\phi^{'}}$,
since $A_{\phi}$ is the same as $A_{\phi^{'}}$ but with more constraints on some of its transitions.
If a match can satisfy the constraints of $A_{\phi}$,
it should also satisfy the more relaxed constraints of $A_{\phi^{'}}$.
We can now conclude, by the induction hypothesis,
that:
$\exists v^{'}: M \in \llbracket \phi^{'} \rrbracket(S,i,v^{'})$.

We can try this valuation for $\phi$ as well.
We then need to prove that $M \in \llbracket \phi \rrbracket(S,i,v^{'})$.
By the definition of the CEPL semantics \cite{DBLP:journals/corr/abs-1709-05369}, 
we see that we need to prove two facts:
\begin{itemize}
	\item 
	$M \in \llbracket \phi^{'} \rrbracket(S,i,v^{'})$
	\item $v^{'}_{S} \vDash f$ or, equivalently, 
	\begin{equation}
	\label{to_prove_v_entails_alpha}
	f(v^{'}_{S}(x_{1}),\cdots,v^{'}_{S}(x_{n}))=\trueb
	\end{equation}		
\end{itemize}
The first one has already been proven.

We now need to prove the second one.
Note first, that,
the initial assumption $M \in \llbracket A_{\phi} \rrbracket(S_{i})$ means that there is an accepting run $\varrho$ such that $match(\varrho)=M$.
By Property \ref{property:bound_variables_appearance},
we can conclude that,
no matter what the accepting run $\varrho$ is,
it will have necessarily passed through a trail where every $x \in var(f)$ appears exactly once
(more precisely, where every $x \in bound(\phi)$ appears once,
and,
since $var(f) \subseteq bound(\phi)$, 
the same for every $x \in var(f)$).
This means that Algorithm \ref{algorithm:cepl2rma:nary} will have updated one transition on this trail, by ``appending'' $f$ to its original formula $f$ (line \ref{line:cepl2rma:nary:append_alpha}).
Moreover, since the run is accepting,
this transition will have applied.
More precisely,
if $\varrho$ is the accepting run and $\delta_{i}$ is this transition, the pair of successor configurations linked through  it would be:
\begin{equation*}
\varrho = \cdots [i,q_{i},\gamma_{i}] \overset{\delta_{i}/o_{i}}{\rightarrow} [i+1,q_{i+1},\gamma_{i+1}] \cdots
\end{equation*}
The formula of $\delta_{i}$ would then be $\delta_{i}.f = \delta^{'}_{i}.f \wedge f$,
where $\delta^{'}_{i}$ is the corresponding transition in $A_{\phi^{'}}$ 
(see again line \ref{line:cepl2rma:nary:append_alpha}).
Now, the fact that $\delta_{i}$ applied means that
\begin{equation*}
\delta_{i}.f(\gamma_{i}(\delta_{i}.rs))=\trueb \Rightarrow \delta^{'}_{i}.f(\gamma_{i}(rs_{old})) \wedge f(\gamma_{i}(rs_{new})) = \trueb
\end{equation*}
where $rs_{old}$ is the register selection of the transition in $A_{\phi^{'}}$
and $rs_{new}$ is the new register selection created for $f(x_{1},\cdots,x_{n})$ in
line \ref{line:cepl2rma:nary:create_new_rs}.
But this also implies that 
\begin{equation}
\label{fact:aplha_rs_true}
f(\gamma_{i}(rs_{new})) := f(\gamma_{i}(rs_{new}.1),\cdots,\gamma_{i}(rs_{new}.n))= \trueb
\end{equation}
Note the similarity between what we have established thus far in Eq. \eqref{fact:aplha_rs_true}
and what we need to prove in Eq. \eqref{to_prove_v_entails_alpha}.
We will now prove that $\gamma_{i}(rs_{new}.b)=v^{'}_{S}(x_{b}), \forall 1 \leq b \leq n$
and this will conclude our proof
(note that we will not deal with $\epsilon$ and \trueb\ transitions,
since they always apply and do not affect the registers.).

As we have shown,
if $\varrho$ is an accepting run of $A_{\phi}$,
then a run $\varrho^{'}$ of $A_{\phi^{'}}$ is induced which is also accepting:
\begin{equation*}
\varrho^{'} = \cdots [i,q_{i},\gamma^{'}_{i}] \overset{\delta^{'}_{i}/o_{i}}{\rightarrow} [i+1,q_{i+1},\gamma^{'}_{i+1}] \cdots
\end{equation*}
where the transitions $\delta_{i}$ of $A_{\phi}$ are the transitions $\delta^{'}_{i}$ of $A_{\phi^{'}}$,
possibly modified (in their formulas or writing registers) by Algorithm \ref{algorithm:cepl2rma:nary}
and 
\begin{equation}
\label{eq:gamma_entails_gamma_prime}
\gamma_{i} \vDash \gamma^{'}_{i}
\end{equation}
i.e., the contents of registers common to both RMA are the same.

Let $x_{b}$ be a variable of $f$ in Eq. \eqref{to_prove_v_entails_alpha}.
Since $\phi$ is bounded, $x_{b} \in bound(\phi^{'})$.
Let $t_{j}=v^{'}_{S}(x_{b})$ be the tuple assigned to $x_{b}$ by valuation $v^{'}$.
By the induction hypothesis and Property \ref{property:registers_valuation},
we know that, for $\varrho^{'}$ and $v^{'}$, the following hold:
\begin{itemize}
	\item For the transition $\delta^{'}_{j}$ that consumed $t_{j}$, 
	\begin{equation}
	\label{eq:deltaj_xj}
	\delta x^{'}(\delta^{'}_{j})=x_{b}
	\end{equation}
	\item If $x_{b} \in \mathit{range}(rx^{'})$ with $rx^{'}(r^{'}_{b})=x_{b}$:
	\begin{equation}
	\label{eq:gammak_xj}
	\gamma^{'}_{k}(r^{'}_{b}) =
  	\begin{cases}
  	\sharp & k \leq j \\
  	 t_{b} &  k > j
  	\end{cases}
	\end{equation}
\end{itemize}

As we have said,
the transition $\delta_{i}$ of $A_{\phi}$ is a transition that has been modified by ``appending'' the
formula $f$ to the formula $\delta^{'}_{i}$ of $A_{\phi^{'}}$.
However, note that Algorithm \ref{algorithm:cepl2rma:nary}
can do this ``appending'' only if one variable of $f$ (say $x_{m}$) is associated with $\delta^{'}_{i}$
and all other variables of $f$ appear in every trail to $\delta^{'}_{i}$ 
(more precisely, to its source state).
Let $w^{'}$ be the walk on $A_{\phi^{'}}$ induced by $\varrho^{'}$:
$w^{'} {=} <\cdots, \delta^{'}_{i}, \cdots>$.
We will now prove that no variables of $f$ can appear after $\delta^{'}_{i}$ in $w^{'}$.
Assume that one variable of $f$ does indeed appear after $\delta^{'}_{i}$.
Now, if we take the sub-walk of $w^{'}$ that ends at $\delta^{'}_{i}$:
$w^{'}_{i} {=} <\cdots, \delta^{'}_{i}>$,
we know (proof omitted)
that $w^{'}_{i}$ contains a trail to $\delta^{'}_{i}.source$.
But this trail will necessarily contain all variables $x \in var(f) - \{x_{m}\}$.
Therefore, if such a variable appears after $\delta^{'}_{i}.source$ as well,
it will appear at least twice in $w^{'}$.
But, since $x \in bound(\phi^{'})$ and $w^{'}$ is a walk induced by the accepting run $\varrho^{'}$,
by Property \ref{property:bound_variables_appearance},
this is a contradiction.
With respect to $x_{m}$,
by the same property,
we know that $x_{m}$ appears in $\delta^{'}_{i}$,
thus it cannot appear later.
Note that this is also true for $w$ and $\delta_{i}$,
since the two RMA are structurally the same and they have the same $\delta x$ functions.

Going back to $x_{b}$, we can refine Eq. \eqref{eq:deltaj_xj} and \eqref{eq:gammak_xj} to:
\begin{itemize}
	\item Either $x_{b}$ appears at $\delta^{'}_{i}$ ($j=i$), i.e., $\delta^{'}_{j}=\delta^{'}_{i}$.
	Thus 
	\begin{equation}
	\label{eq:delta_prime_xb}
	\delta x^{'}(\delta^{'}_{j})=\delta x^{'}(\delta^{'}_{i})=x_{b}
	\end{equation}		
	and if $x_{b} \in \mathit{range}(rx^{'})$ with $rx^{'}(r^{'}_{b})=x_{b}$
	\begin{equation}
	\gamma^{'}_{j}(r^{'}_{b}) = \gamma^{'}_{i}(r^{'}_{b}) = \sharp
	\end{equation}
	\item or $x_{b}$ appears before $\delta^{'}_{i}$ ($j<i$).
	Thus, if $x_{b} \in \mathit{range}(rx^{'})$ with $rx^{'}(r^{'}_{b})=x_{b}$:
	\begin{equation}
	\label{eq:gamma_rb_tj}
	\gamma^{'}_{i}(r^{'}_{b}) = t_{j} 
	\end{equation}
\end{itemize}

We can now check the different cases for the registers in Eq. \eqref{fact:aplha_rs_true}.
\begin{itemize}
	\item $rs_{new}.b = \sim$. 
	By definition, this means that $\gamma_{i}(rs_{new}.b)=\gamma_{i}(\sim)=t_{i}$.
	Note that $t_{i}$ (more precisely, its index $i$) belongs to the match $M$,
	i.e., $i \in match(\varrho)=M$ 
	and $i \in match(\varrho^{'})$ as well.
	This means that $i$ is the image of some variable $x_{bi}$ in the valuation $v^{'}$, i.e.,
	$v^{'}_{S}(x_{bi})=t_{i}$.
	By Property \ref{property:registers_valuation},
	this means that $\delta x^{'}(\delta^{'}_{i})=x_{bi}$.
	Additionally, Algorithm \ref{algorithm:cepl2rma:nary:create_new_rs} will return $\sim$
	for $rs_{new}.b$ only if $\delta x^{'}(\delta^{'}_{i})=x_{b}$.
	Therefore, $x_{bi}=x_{b}$ and $v^{'}_{S}(x_{b})=v^{'}_{S}(x_{bi})=t_{i}$.
	As a result, $\gamma_{i}(rs_{new}.b)=v^{'}_{S}(x_{b})=t_{i}$. 
	\item $rs_{new}.b \in A_{\phi^{'}}.RG$, i.e., this register is common to both RMA.
	By the construction Algorithm \ref{algorithm:cepl2rma:nary:create_new_rs},
	we know that $rx(rs_{new}.b)=rx^{'}(rs_{new}.b)=x_{b}$ and that $x_{b}$ appears before $\delta^{'}_{i}$. 
	Now, let $t_{j}=v^{'}_{S}(x_{b})$ be the tuple assigned to $x_{b}$ by valuation $v^{'}$.
	By Eq. \eqref{eq:gamma_rb_tj}, we know that $\gamma^{'}_{i}(rs_{new}.b)=t_{j}$.
	By Eq. \eqref{eq:gamma_entails_gamma_prime}, we also know that $\gamma_{i}(rs_{new}.b)=\gamma^{'}_{i}(rs_{new}.b)$.
	Therefore, $\gamma_{i}(rs_{new}.b)=v^{'}_{S}(x_{b})=t_{j}$.
	\item $rs_{new}.b \notin A_{\phi^{'}}.RG$, i.e., this is a new register.
	By the construction Algorithm \ref{algorithm:cepl2rma:nary:create_new_rs},
	we know that $rx(rs_{new}.b)=x_{b}$, $x_{b} \notin \mathit{range}(rx^{'})$ and that $x_{b}$ appears before $\delta^{'}_{i}$ / $\delta_{i}$. 
	Now, let $t_{j}=v^{'}_{S}(x_{b})$ be the tuple assigned to $x_{b}$ by valuation $v^{'}$
	and $\delta^{'}_{j}$ the transition (before $\delta^{'}_{i}$) that consumed $t_{j}$.
	By Eq. \eqref{eq:delta_prime_xb}, we know that $\delta x^{'}(\delta^{'}_{j})=x_{b}$.
	But Algorithm \ref{algorithm:cepl2rma:nary:create_new_rs} will have updated $\delta^{'}_{j}$
	so that $\delta_{j}.R=\{rs_{new}.b\}$.
	This means that $\delta_{j}$ will write $t_{j}$ to $rs_{new}.b$,
	thus 
	\begin{equation*}
	\gamma_{k}(rs_{new}.b) =
  	\begin{cases}
  	\sharp & k \leq j \\
  	 t_{j} &  k > j
  	\end{cases}
	\end{equation*}
	(reminder: $x_{b}$ appears only once in $w$ which means that $rs_{new}.b$ will be written only once at $j$).
	Since $i>j$, $\gamma_{i}(rs_{new}.b)=t_{j}$, which implies that $\gamma_{i}(rs_{new}.b)=v^{'}_{S}(x_{b})$.
\end{itemize}

With respect to Property \ref{property:registers_valuation},
note that it also holds for $\varrho$ of $A_{\phi}$ 
and $v^{'}$,
as a valuation for $\phi$.
By the induction hypothesis,
it holds for $\varrho^{'}$ of $A_{\phi^{'}}$ 
and $v^{'}$,
as a valuation for $\phi^{'}$.
But the corresponding transitions in $\varrho$
are the same as those of $\varrho^{'}$,
as far as their associated variables are concerned
($\delta x$ remains the same).
Additionally, $v^{'}$,
as we have just proved
is a valuation for $\phi$ as well.
Therefore, the first part of the property holds.
The second part has been proven just above.
We just note that this part holds for the case where $rs_{new}.b \in A_{\phi^{'}}.RG$ as well,
by the induction hypothesis.

\end{proof}

\begin{proof}[Proof of Theorem \ref{theorem:nrma_closure} for determinization]
\label{proof:theorem:wcepl2drma:short}
The process for constructing a deterministic RMA (DRMA) from a CEPL expression is shown in Algorithm \ref{algorithm:determinization_multi_run}.
It first constructs a non-deterministic RMA (NRMA) and then uses the power set of this NRMA's states to construct the DRMA.
For each state $q^{D}$ of the DRMA,
it gathers all the formulas from the outgoing transitions of the states of the NRMA $q^{N}$ ($q^{N} \in q^{D}$),
it creates the (mutually exclusive) marked min--terms from these formulas
and then creates transitions, based on these min--terms.
A min--term is called marked when the output, $\bullet$ or $\circ$,
is also taken into account.
For each original min--term,
we have two marked min--terms,
one where the output is $\bullet$ and one where the output is $\circ$.
Please, note that this is the first time that we use the ability of a transition to write to more than one registers.
So, from now on,
$\delta.R$ will be a set that is not necessarily a singleton.
This allows us to retain the same set of registers, i.e.,
the set of registers $RG$ will be the same for the NRMA and the DRMA.
A new transition created for the DRMA may write to multiple registers,
if it ``encodes'' multiple transitions of the NRMA,
which may write to different registers.
It is also obvious that the resulting RMA is deterministic,
since the various min--terms out of every state are mutually exclusive for the same output.

First, we will prove the following proposition:
There exists a run $\varrho^{N} \in Run_{k}(A^{N},S(i))$ that $A^{N}$ can follow by reading the first $k$ tuples from the sub-stream $S(i)$,
iff there exists a run $\varrho^{D} \in Run_{k}(A^{D},S(i))$ 
that $A^{D}$ can follow by reading the same first $k$ tuples,
such that, if
\begin{equation*}
\varrho^{N} = [i,q_{i}^{N},\gamma_{i}^{N}] \overset{\delta_{i}^{N}/o_{i}^{N}}{\rightarrow} [i+1,q_{i+1}^{N},\gamma_{i+1}^{N}] \cdots [i+k,q_{i+k}^{N},\gamma_{i+k}^{N}]
\end{equation*}
and
\begin{equation*}
\varrho^{D} = [i,q_{i}^{D},\gamma_{i}^{D}] \overset{\delta_{i}^{D}/o_{i}^{D}}{\rightarrow} [i+1,q_{i+1}^{D},\gamma_{i+1}^{D}] \cdots [i+k,q_{i+k}^{D},\gamma_{i+k}^{D}]
\end{equation*}
are the runs of $A^{N}$ and $A^{D}$ respectively, then,
\begin{itemize}
	\item $q_{j}^{N} \in q_{j}^{D}\ \forall j: i \leq j \leq i+k$
	\item if $r \in A_{D}.RG$ appears in $\varrho^{N}$, then it appears in $\varrho^{D}$
	\item $\gamma_{j}^{N}(r) = \gamma_{j}^{D}(r)$  for every $r$ that appears in $\varrho^{N}$ (and $\varrho^{D})$
\end{itemize}

We say that a register $r$ appears in a run at position $m$ if $r \in \delta_{m}.R$.

We will prove only direction (the other is similar).
Assume there exists a run $\varrho^{N}$.
We will prove that there exists a run $\varrho^{D}$ by induction on the length $k$ of the run.

\textbf{Base case: $k=0$.}
Then $\varrho^{N}=[i,q_{i}^{N},\gamma_{i}^{N}]=[i,q^{s,N},\gamma^{s,N}]$.
The run $\varrho^{D}=[i,q^{s,D},\gamma^{s,D}]$ is indeed a run of the DRMA
that satisfies the proposition,
since $q^{s,N} \in q^{s,D} = \{q^{s,N}\}$ 
(by the construction algorithm, line \ref{line:determinization_multi_run:start_state}),
all registers are empty
and no registers appear in the runs. 

\textbf{Case $k>0$.}
Assume the proposition holds for $k$.
We will prove it holds for $k+1$ as well.
Let $\varrho_{k}^{N}$ be a run of $A^{N}$ after the first $k$ tuples and
\begin{equation*}
\varrho_{k+1}^{N} = \cdots [i+k,q_{i+k}^{N},\gamma_{i+k}^{N}] \begin{cases}
\overset{\delta_{i+k}^{N,1}/\bullet}{\rightarrow} [i+k+1,q_{i+k+1}^{N,1},\gamma_{i+k+1}^{N,1}] \\
\cdots \\
\overset{\delta_{i+k}^{N,m}/\bullet}{\rightarrow} [i+k+1,q_{i+k+1}^{N,m},\gamma_{i+k+1}^{N,m}] \\
\overset{\delta_{i+k}^{N,m+1}/\circ}{\rightarrow} [i+k+1,q_{i+k+1}^{N,m+1},\gamma_{i+k+1}^{N,m+1}] \\
\cdots \\
\overset{\delta_{i+k}^{N,n}/\circ}{\rightarrow} [i+k+1,q_{i+k+1}^{N,n},\gamma_{i+k+1}^{N,n}] \\
\end{cases}
\end{equation*}
be the possible runs of the NRMA after reading $k+1$ tuples and expanding $\varrho_{k}^{N}$.
Then, we need to find a run of the DRMA like:
\begin{equation*}
\varrho_{k+1}^{D} = \cdots [i+k,q_{i+k}^{D},\gamma_{i+k}^{D}] \overset{\delta_{i+k}^{D}/o_{i+k}^{D}}{\rightarrow} [i+k+1,q_{i+k+1}^{D},\gamma_{i+k+1}^{D}]
\end{equation*}
Consider first the transitions whose output is $\bullet$.
By the induction hypothesis,
we know that $q_{i+k}^{N} \in q_{i+k}^{D}$.
By the construction Algorithm \ref{algorithm:determinization_multi_run},
we then know that,
if $f_{k+1}^{N,j}=\delta_{i+k}^{N,j}.f$ is the formula of a transition that takes the non-deterministic run to $q_{i+k+1}^{N,j}$ and outputs a $\bullet$,
then there exists a transition $\delta_{i+k}^{D}$ in the DRMA from $q_{i+k}^{D}$ whose formula will be a min--term,
containing all the $f_{k+1}^{N,j}$ in their positive form
and all other possible formulas in their negated form.
Moreover, the target of that transition, $q_{i+k+1}^{D}$, contains all $q_{i+k+1}^{N,j}$.
More formally,  $q_{i+k+1}^{D} = \bigcup\limits_{j=1}^{m}{q_{i+k+1}^{N,j}}$.
We also need to prove that $\delta_{i+k}^{D}$ applies as well.
As we said,
the formula of this transition is a conjunct,
where all $f_{k+1}^{N,j}$ appear in their positive form and all other formulas of in their negated form.
But note that the formulas in negated form are those that did not apply in $\varrho_{k+1}^{N}$ when reading the $(k+1)^{th}$ tuple.
Additionally,
the arguments passed to each of the formulas of the min--term are the same (registers) as those passed to them in the non-deterministic run
(by the construction algorithm, line \ref{line:determinization_multi_run:register_selection}). 
To make this point clearer,
consider the following simple example of a min--term
(where we have simplified notation and use registers directly as arguments):
\begin{equation*}
f = f_{1}(r_{1,1},\cdots,r_{1,k}) \wedge \neg f_{2}(r_{2,1},\cdots,r_{2,l}) \wedge f_{3}(r_{3,1},\cdots,r_{3,m})
\end{equation*}
This means that $f_{1}(r_{1,1},\cdots,r_{1,k})$,
with the exact same registers as arguments,
will be the formula of a transition of the NRMA that applied.
Similarly for $f_{3}$.
With respect to $f_{2}$,
it will be the formula of a transition that did not apply.
If we can show that the contents of those registers are the same in the runs of the NRMA and DRMA when reading the last tuple,
then this will mean that $\delta_{i+k}^{D}$ indeed applies.
But this is the case by the induction hypothesis
($\gamma_{i+k}^{N}(r) = \gamma_{i+k}^{D}(r)$),
since all these registers appear in the run $\varrho_{k}^{N}$ up to $q_{i+k}^{N}$.
The second part of the proposition also holds,
since,
by the construction,
$\delta_{i+k}^{D}$ will write to all the registers that the various $\delta_{i+k}^{N,j}$ write
(see line \ref{line:determinization_multi_run:register_writing} in Algorithm).
The third part also holds,
since $\delta_{i+k}^{D}$ will write the same tuple to the same registers as the various $\delta_{i+k}^{N,j}$.
By the same reasoning, we can prove the proposition for transitions whose output is $\circ$.

Since the above proposition holds for accepting runs as well,
we can conclude that there exists an accepting run of $A_{N}$ iff there exists an accepting run of $A_{D}$.
According to the above proposition,
the union of the last states over all $\varrho^{N}$ is equal to the last state of $\varrho^{D}$.
Thus, if $\varrho^{N}$ reaches a final state,
then the last state of $\varrho^{D}$ will contain this final state and hence be itself a final state.
Conversely, if $\varrho^{D}$ reaches a final state of $A_{D}$,
it means that this state contains a final state of $A_{N}$.
Then, there must exist a $\varrho^{N}$ that reached this final state.

What we have proven thus far is that $\varrho^{N}$ is accepting iff $\varrho^{D}$ is accepting.
Therefore, the two RMA are indeed equivalent if viewed as recognizers / acceptors.
Note, however, that the two runs, 
$\varrho_{k+1}^{N}$ and $\varrho_{k+1}^{D}$,
mark the stream at the same positions.
Therefore, if they are accepting runs, 
they produce the same matches,
i.e.,
if $M \in \llbracket A^{N} \rrbracket(S_{i})$,
then $M \in \llbracket A^{D} \rrbracket(S_{i})$.
\end{proof}

\begin{algorithm}
\caption{Determinization.}
\label{algorithm:determinization_multi_run}
\KwIn{Bounded core--CEPL expression $\phi$}
\KwOut{Deterministic RMA $A^{D}$ equivalent to $\phi$}
$A^{N} \leftarrow \mathit{ConstructRMA}(\phi)$\; 
$Q^{D} \leftarrow \mathit{ConstructPowerSet}(A^{N}.Q)$\;
$\Delta^{D} \leftarrow \emptyset$;	$Q^{f,D} \leftarrow \emptyset$\;
\ForEach{$q^{D} \in Q^{D}$}{
	\If{$q^{D} \cap A^{N}.Q^{f} \neq \emptyset$}{
		$Q^{f,D} \leftarrow Q^{f,D} \cup q^{D}$\;
	}
	$\mathit{Formulas} \leftarrow ()$;	$rs^{D} \leftarrow ()$\;
	\ForEach{$q^{N} \in q^{D}$}{
		\ForEach{$\delta^{N} \in A^{N}.\Delta: \delta^{N}.\mathit{source} = q^{N}$ }{
			$\mathit{Formulas} \leftarrow \mathit{Formulas} :: \delta^{N}.f$\;
			$rs^{D} \leftarrow rs^{D} :: \delta^{N}.rs$\; \label{line:determinization_multi_run:register_selection}
		}
	}
	$\mathit{MarkedMinTerms} \leftarrow \mathit{ConstructMarkedMinTerms}(\mathit{Formulas},\{ \bullet, \circ \})$\; 
	\ForEach{$\mathit{mmt} \in \mathit{MarkedMinTerms}$}{
		$p^{D} \leftarrow \emptyset$;	$R^{D} \leftarrow \emptyset$\;
		\ForEach{$q^{N} \in q^{D}$}{
			\ForEach{$\delta^{N} \in A^{N}.\Delta: \delta^{N}.\mathit{source} = q^{N}$ }{
				\If{$\mathit{mmt} \vDash \delta^{N}.f \wedge \mathit{mmt}.o=\delta^{N}.o$}{
					$p^{D} \leftarrow p^{D} \cup \{\delta^{N}.\mathit{target}\}$\;
					$R^{D} \leftarrow R^{D} \cup \{\delta^{N}.R\}$\; \label{line:determinization_multi_run:register_writing}
					$o^{D} \leftarrow \mathit{mmt}.o$\;
				}
			}
		}
		$\delta^{D} \leftarrow \mathit{CreateNewTransition}((q^{D},mt,rs^{D}) \rightarrow (p^{D},rw^{D},o^{D}))$\;
		$\Delta^{D} \leftarrow \Delta^{D} \cup \{\delta^{D}\}$\;
	}
}
$q^{s,D} \leftarrow \{A^{N}.q^{s}\}$\; \label{line:determinization_multi_run:start_state}
$A^{D} \leftarrow (Q^{D},q^{s,D},Q^{f,D},A^{N}.RG,\Delta^{D})$\;
return $A^{D}$\;
\end{algorithm}


\begin{proof}[Proof of Lemma \ref{lemma:unrolled_rma}]
Let $\phi := \psi\ \ceplwin\ w$.
Algorithm \ref{algorithm:wcepl2rma} shows how we can construct $A^{\phi}$
(we use superscripts to refer to expressions and reserve subscripts for referring to stream indexes in the proof).
The basic idea is that we first construct as usual the RMA $A^{\psi}$ for the sub-expression $\psi$
(and eliminate $\epsilon$-transitions).
We can then use $A^{\psi}$ to enumerate all the possible walks of $A^{\psi}$ of length up to $w$ and then join them in a single RMA through disjunction.
Essentially,
we need to remove cycles from every walk of $A^{\psi}$ by ``unrolling'' them as many times as necessary,
without the length of the walk exceeding $w$.
This ``unrolling'' operation is performed by the (recursive) Algorithm \ref{algorithm:unroll_cyclesk}.
Because of this ``unrolling'',
a state of $A^{\psi}$ may appear multiple times as a state in $A^{\phi}$.
We keep track of which states of $A^{\phi}$ correspond to states of $A^{\psi}$ through the function $\mathit{CopyOfQ}$ in the algorithm.
For example, if $q^{\psi}$ is a state of $A^{\psi}$, $q^{\phi}$ a state of $A^{\phi}$ and
$\mathit{CopyOfQ(q^{\phi}) = q^{\psi}}$,
this means that $q^{\phi}$ was created as a copy of $q^{\psi}$
(and multiple states of $A^{\phi}$ may be copies of the same state of $A^{\psi}$).
We do the same for the registers as well, through the function $\mathit{CopyOfR}$.
The algorithm avoids an explicit enumeration,
by gradually building the automaton as needed,
through an incremental expansion.
Of course, walks that do not end in a final state may be removed,
either after the construction or online,
whenever a non-final state cannot be expanded.

\begin{algorithm}
\caption{Constructing RMA for windowed CEPL expression.}
\label{algorithm:wcepl2rma}
\SetAlgoNoLine
\KwIn{Windowed core-CEPL expression $\phi := \psi\ \ceplwin\ w$}
\KwOut{RMA $A^{\phi}$ equivalent to $\phi$}
$(A^{\psi,\epsilon}, \delta x^{\psi,\epsilon}, rx^{\psi,\epsilon}) \leftarrow ConstructRMA(\psi)$\; 
$A^{\psi} \leftarrow \mathit{EliminateEpsilon}(A^{\psi,\epsilon})$\; 
$A^{\phi} \leftarrow Unroll(A^{\psi},w)$; \tcp{{\footnotesize see Algorithm \ref{algorithm:unroll_cyclesk}}}\ 
$\delta^{loop} \leftarrow \mathit{CreateNewTransition}((A^{\phi}.q^{s},\trueb,\sim) \rightarrow (A^{\phi}.q^{s},\emptyset,\circ))$\;
$A^{\phi}.\Delta \leftarrow A^{\phi}.\Delta \cup \{\delta^{loop}\}$\;
return $A^{\phi}$\;
\end{algorithm}


\begin{algorithm}
\caption{Unrolling cycles for windowed expressions, $k > 0$.}
\label{algorithm:unroll_cyclesk}
\KwIn{RMA $A$ and integer $k > 0$}
\KwOut{RMA $A_{k}$ with runs of length up to $k$}
$(A_{k-1},FrontStates,CopyOfQ,CopyOfR) \leftarrow Unroll(A,k-1)$\;
$NextFronStates \leftarrow \emptyset$\;
$Q_{k} \leftarrow A_{k-1}.Q$; $Q_{k}^{f} \leftarrow A_{k-1}.Q^{f}$\;
$RG_{k} \leftarrow A_{k-1}.RG$; $\Delta_{k} \leftarrow A_{k-1}.\Delta$\;
\ForEach{$q \in FrontStates$}{
	$q_{c} \leftarrow CopyOfQ(q)$\;
	\ForEach{$\delta \in A.\Delta: \delta.source = q_{c}$}{
		$q_{new} \leftarrow CreateNewState()$\;
		$Q_{k} \leftarrow Q_{k} \cup \{q_{new}\}$\;
		$CopyOfQ \leftarrow CopyOfQ \cup \{ q_{new} \rightarrow \delta.target \}$\;
		\If{$\delta.target \in A.Q^{f}$}{
			$Q_{k}^{f} \leftarrow Q_{k}^{f} \cup \{q_{new}\}$\;
		}	
		\uIf{$\delta.R = \emptyset$}{
			$R_{new} \leftarrow \emptyset$\;	
		}
		\Else{
			$r_{new} \leftarrow CreateNewRegister()$\;
			$RG_{k} \leftarrow RG_{k} \cup \{ r_{new} \}$\;
			$R_{new} \leftarrow \{ r_{new} \}$\;
			$CopyOfR \leftarrow CopyOfR \cup \{r_{new} \rightarrow \delta.r \}$; \label{line:unroll_cycles:rn_descendent}
			\tcp{{\footnotesize $\delta.r$ single element of $\delta.R$}}
		}		
		$f_{new} \leftarrow \delta.f$\;
		$o_{new} \leftarrow \delta.o$\;
		$rs_{new} \leftarrow ()$\;
		\ForEach{$r \in \delta.rs$}{
			$r_{latest} \leftarrow FindLastAppearance(r,q,A_{k-1})$\; \label{line:unroll_cycles:latest_appearance}
			$rs_{new} \leftarrow rs_{new} :: r_{latest}$\;
		}
		$\delta_{new} \leftarrow \mathit{CreateNewTransition}((q,f_{new},rs_{new}) \rightarrow (q_{new},R_{new},o_{new}))$\;
		$\Delta_{k} \leftarrow \Delta_{k} \cup \{ \delta_{new} \}$\;
		$NextFrontStates \leftarrow NextFrontStates \cup \{q_{new}\}$\;
	}
}
$A_{k} \leftarrow (Q_{k}, A_{k-1}.q^{s}, Q_{k}^{f}, RG_{k}, \Delta_{k})$\;
return $(A_{k},NextFrontStates,CopyOfQ,CopyOfR)$\;
\end{algorithm}

\begin{algorithm}
\caption{Unrolling cycles for windowed expressions, base case: $k=0$.}
\label{algorithm:unroll_cycles0}
\SetAlgoNoLine
\KwIn{RMA $A$}
\KwOut{RMA $A_{0}$ with runs of length 0}
$q \leftarrow CreateNewState()$\;
$CopyOfQ \leftarrow \{q \rightarrow A.q^{s}\}$\; \label{line:unroll_cycles:qs_descendent}
$CopyOfR \leftarrow \emptyset$\;
$FrontStates \leftarrow \{q\}$\;
$Q^{f} \leftarrow \emptyset$\;
\If{$A.q^{s} \in A.Q^{f}$}{
	$Q^{f} \leftarrow Q^{f} \cup \{q\}$\;
}
$A_{0} \leftarrow (\{q\},q,Q^{f},\emptyset,\emptyset)$\;
return $(A_{0},FrontStates,CopyOfQ,CopyOfR)$\;
\end{algorithm}

The lemma is a direct consequence of the construction algorithm.
First, note that,
by the construction algorithm,
there is a one-to-one mapping (bijective function) between the walks/runs of $A^{\phi}$
and the walks/runs of $A^{\psi}$ of length up to $w$.
We can show that if $\varrho^{\psi}$ is a run of $A^{\psi}$ of length up to $w$,
then the corresponding run $\varrho^{\phi}$ of $A^{\phi}$ is indeed a run,
with $match(\varrho^{\psi})=match(\varrho^{\phi})=M$,
where,
by definition, 
since the runs have no $\epsilon$-transitions and are at most of length $w$,
$max(M)-min(M)<w$.

We first prove the following proposition:
There exists a run $\varrho^{\psi}$ of $A^{\psi}$ of length up to $w$ iff
there exists a run $\varrho^{\phi}$ of $A^{\phi}$ such that:
\begin{itemize}
	\item $\mathit{CopyOfQ}(q_{j}^{\phi}) = q_{j}^{\psi}$
	\item $\gamma_{j}^{\psi}(r^{\psi})=\gamma_{j}^{\phi}(r^{\phi})$, 
	if
	$\mathit{CopyOfR}(r^{\phi})=r^{\psi}$
	and
	$r^{\phi}$ appears last among the registers that are copies of $r^{\psi}$ in $\varrho^{\phi}$.
\end{itemize}

We say that a register $r$ appears in a run at position $m$ if $r \in \delta_{m}.R$,
i.e.,
if the $m^{th}$ transition writes to $r$.
The notion of a register's (last) appearance also applies for walks of $A^{\phi}$,
since $A^{\phi}$ is a directed acyclic graph,
as can be seen by Algorithms \ref{algorithm:unroll_cycles0} and \ref{algorithm:unroll_cyclesk}
(they always expand ``forward'' the RMA, 
without creating any cycles and without converging any paths).

The proof is by induction on the length of the runs $k$, with $k \leq w$.
We prove only one direction (assume a run $\varrho^{\psi}$ exists).
The other is similar.

\textbf{Base case: $k=0$.}
For both RMA, only the start state and the initial configuration with all registers empty is possible.
Thus, $\gamma_{i}^{\psi}=\gamma_{i}^{\phi}=\sharp$ for all registers. 
By Algorithm \ref{algorithm:unroll_cycles0} (line \ref{line:unroll_cycles:qs_descendent}),
we know that $\mathit{CopyOf}(q^{s,\phi}) = q^{s,\psi}$.

\textbf{Case for $0 < k+1 \leq w$.}
Let
\begin{equation*}
\varrho_{k+1}^{\psi} = \cdots [i+k,q_{i+k}^{\psi},\gamma_{i+k}^{\psi}] \overset{\delta_{i+k}^{\psi}/o_{i+k}^{\psi}}{\rightarrow} [i+k+1,q_{i+k+1}^{\psi},\gamma_{i+k+1}^{\psi}]
\end{equation*}
and
\begin{equation*}
\varrho_{k+1}^{\phi} = \cdots [i+k,q_{i+k}^{\phi},\gamma_{i+k}^{\phi}] \overset{\delta_{i+k}^{\phi}/o_{i+k}^{\phi}}{\rightarrow} [i+k+1,q_{i+k+1}^{\phi},\gamma_{i+k+1}^{\phi}]
\end{equation*}
be the runs of $A^{\psi}$ and $A^{\phi}$ respectively of length $k+1$ over the same $k+1$ tuples.
We know that $\varrho_{k+1}^{\psi}$ is an actual run and we need to construct $\varrho_{k+1}^{\phi}$, 
knowing, by the induction hypothesis,
that it is an actual run up to $q_{i+k}^{\phi}$.
Now, by the construction algorithm,
we can see that if $\delta_{i+k}^{\psi}$ is a transition of $A^{\psi}$ from $q_{i+k}^{\psi}$ to $q_{i+k+1}^{\psi}$,
there exists a transition $\delta_{i+k}^{\phi}$ with the same formula and output
from $q_{i+k}^{\phi}$ to a $q_{i+k+1}^{\phi}$ such that $CopyOfQ(q_{i+k+1}^{\phi})=q_{i+k+1}^{\psi}$.
Moreover, if $\delta_{i+k}^{\psi}$ applies,
so does $\delta_{i+k}^{\phi}$,
because the registers in the register selection of $\delta_{i+k}^{\phi}$
are copies of the corresponding registers in $\delta_{i+k}^{\psi}.rs$.
By the induction hypothesis,
we know that the contents of the registers in $\delta_{i+k}^{\psi}.rs$ will be equal to the contents of their corresponding registers in $\varrho^{\phi}$ that appear last.
But these are exactly the registers in $\delta_{i+k}^{\phi}.rs$
(see line \ref{line:unroll_cycles:latest_appearance} in Algorithm \ref{algorithm:unroll_cyclesk}).
We can also see that the part of the proposition concerning the $\gamma$ functions also holds.
If $\delta_{i+k}^{\psi}.R = \{ r^{\psi} \}$ and $\delta_{i+k}^{\phi}.R = \{ r^{\phi} \}$,
then we know,
by the construction algorithm
(line \ref{line:unroll_cycles:rn_descendent}),
that $CopyOfR(r^{\phi}) = r^{\psi}$ and $r^{\phi}$ will be the last appearance of a copy of $r^{\psi}$ in $\varrho_{k+1}^{\phi}$.
Thus the proposition holds for $0 < k+1 \leq w$ as well.

The proof of the proposition above also shows that the outputs of the transitions of the two runs will be the same,
thus,
since the proposition holds for accepting runs as well,
$match(\varrho^{\psi})=match(\varrho^{\phi})=M$,
if $\varrho^{\psi}$ and $\varrho^{\psi}$ are accepting
(note that they must be either both accepting or both non-accepting).

One last touch is required.
The RMA $A^{\phi}$,
as explained,
can have runs of finite length.
On the other hand,
the original expression applies to (possibly infinite) streams.
Therefore, one last modification to $A^{\phi}$ is needed.
We add a loop, \trueb\ transition from the start state to itself,
so that a run may start at any point in the stream.
The ``effective'' maximum length of every run, however, remains $w$. 
The final RMA will then have the form of a tree 
(no cycles exist and walks can only split but not converge back again),
except for its start state with its self-loop.

We also note that $w$ must be a number greater than (or equal to) 
the minimum length of the walks induced by the accepting runs
(which is something that can be computed by the structure of the expression).
Although this is not a formal requirement,
if it is not satisfied,
then the RMA won't detect any matches.

\end{proof}

\end{document}